\newcommand {\E}{\mathbb{E}}
\newcommand {\N}{\mathbb{N}}
\newcommand {\Z}{\mathbb{Z}}
\newcommand {\R}{\mathbb{R}}
\newcommand {\V}{\mathbb{V}}
\newcommand {\rd}{\mathrm{d}}
\newcommand {\T}{\mathbb{T}}
\newtheorem{theorem}{Theorem}[section]
\newtheorem{proposition}[theorem]{Proposition}
\newtheorem{lemma}[theorem]{Lemma}
\numberwithin{equation}{section}
\begin{document}
\title{A rigourous demonstration of the validity of \\Boltzmann's scenario for
the spatial homogenization of a freely expanding gas and the equilibration of the Kac ring}
\author{S. {\ De Bi\`{e}vre}\thanks{Stephan.De-Bievre@math.univ-lille1.fr} \\
Univ. Lille, CNRS, UMR 8524 - Laboratoire Paul Painlev\'e\\ F-59000 Lille, France\\
Equipe-Projet MEPHYSTO,
Centre de Recherche INRIA Futurs,\\
Parc Scientifique de la Haute Borne, 40, avenue Halley B.P. 70478,\\
F-59658 Villeneuve d'Ascq cedex, France\\
and\\
P.~E. {Parris}\thanks{
parris@mst.edu} \\
Department of Physics\\
Missouri University of Science \& Technology,\\
Rolla, MO 65409, USA\\
}
\date{January 14 2017}

\maketitle
\abstract{Boltzmann provided a scenario to explain why individual macroscopic systems composed of a large number $N$ of microscopic constituents are inevitably (i.e., with overwhelming probability) observed to approach a unique macroscopic state of thermodynamic equilibrium, and why after having done so, they are then observed to remain in that state, apparently forever. We provide here rigourous  new results that mathematically prove the basic features of Boltzmann's  scenario for two classical models: a simple boundary-free model for the spatial homogenization of a non-interacting gas of point particles, and the well-known Kac ring model. Our results, based on concentration inequalities that go back to Hoeffding, and which focus on the typical behavior of individual macroscopic systems, improve upon previous results by providing  estimates, exponential in $N$, of probabilities and time scales involved.}

\newpage
\section{Introduction}
It is a fundamental tenet of thermodynamics, well-grounded in common experience, that an individual
isolated macro\-scopic sys\-tem will, starting from an arbitrary initial state, evolve
irreversibly towards a unique stationary state of thermodynamic equilibrium, in
which it will then remain.
Boltzmann, assuming macroscopic matter to be composed of a large number $N$ of elementary
constituents (atoms and/or molecules), provided a compelling scenario to explain how this
irreversible behaviour arises from the manifestly reversible laws of mechanics. His
arguments, which have been detailed and re-explained by numerous authors~\cite{Eh59, Ka59, Fe67,
Le93, Br96, Pe90, Le08}, involve several crucial elements.

First, Boltzmann identified
states of thermodynamic equilibrium with those macrostates of a system
which correspond to the largest number of microstates that are actually
consistent with the physical constraints imposed upon it.

Next he argued that, because the microstates corresponding to
equilibrium occupy such an overwhelmingly large fraction of the available
phase space, such a macroscopic system would almost certainly evolve from any microstate that does
not correspond to equilibrium to one that does.

This scenario allowed Boltzmann to both explain the inevitably observed
approach to equilibrium,
and to refute objections to his ideas raised in the form of
paradoxes famously posed by Zermelo and by Loschmidt.

Crucial to Boltzmann's arguments is the fact
that the number of constituents making up macroscopic collections of matter is
overwhelmingly large.

This fact, in his view, strongly suggested that those initial microstates of an initially constrained  macroscopic system that do exhibit an approach to equilibrium when the constraints are removed (i.e., those that are, in fact, consistent with thermodynamics) are themselves overwhelmingly numerous, and therefore ``typical'' of the initial states in which such an initially constrained system is likely to be prepared in the absence of (perhaps extraordinary) measures taken to prevent it.

The enormously large number of constituents of macroscopic systems
should also, he suggested, allow for estimates to
be made of the enormously long interval of time over which a
typical individual system will appear to remain in the equilibrium
macrostate it eventually reaches.

Obviously, he reasoned, this equilibrium residence time must be
longer than the duration of humanly-possible observation times,
although necessarily shorter than the \emph{recurrence times},
also predicted by  the laws of mechanics, at which the system
must necessarily pass arbitrarily close to its initial state.

Thus, according to Boltzmann, it is the overwhelming fraction of typical initial
conditions and the enormity of typical equilibrium residence times
that allow the aforementioned paradoxes to be resolved.

Despite these compelling but generally unproven arguments, Boltzmann's
scenario continues to meet resistance and various claims can be
found in the literature contesting their validity.
We refer to~\cite{Br96, Le08} for a detailed analysis of the continuing
controversy. As an example, although it has been thoroughly refuted, it is still commonly argued that
ergodicity or mixing is both a necessary
and sufficient ingredient for a macroscopic dynamical system to demonstrate an approach to
equilibrium. This situation is perhaps a consequence of the fact that a full mathematical and even physical treatment of Boltzmann's ideas is not
yet available~\cite{Ka59, Ru91, Le08, Vi14}.

It is the view of the authors that a useful step in clearing up the controversy that
persists regarding the approach to equilibrium and the irreversibility of individual
macroscopic systems, is the identification of deterministic time-reversible
dynamical models for which Boltzmann's scenario can be
pushed through with complete mathematical rigour. Such models can help
to eliminate all doubts as to what, precisely, is being claimed and how, exactly,
the various longstanding objections to
Boltzmann's scenario come to be resolved.

To this end we
revisit here two classical models that allow such a
program to be fully and explicitly carried out in the context of the approach
to equilibrium of individual macroscopic systems.

We study first a very natural, simple, and therefore tractable model \cite{Fr58, Be10, Be12} for what
is arguably the most common
textbook example of approach to equilibrium: the free expansion
of a gas. The microscopic dynamics, assumed to be that of a gas of $N$ non-interacting point particles,
is Hamiltonian and time-reversible. To simplify the situation further, we envision the
dynamics of the gas particles as taking place on a $d$-dimensional torus, so that there are no
interactions, even, with the walls of any hypothetical container (See Fig. 1). As a result, the dynamics of
the $N$-particle system is recurrent and neither mixing nor ergodic on the energy surface. It is actually completely integrable.
Since there are no energy exchanges between the particles themselves,
and no interactions with any environment, the gas in our model can obviously not approach
(although it will maintain) a thermal distribution of particle velocities as it
expands. What we focus on here, therefore, for individual $N$-particle systems, is
the irreversible approach of an initially inhomogeneous coarse-grained gas density profile of independent particles
to a state of uniform spatial homogeneity (See Fig.~\ref{fig:expand}).

Our main result, stated as
 Theorem~\ref{thm:gasexpansion}, asserts that with overwhelmingly large probability,
to be made more precise in what follows,
the coarse-grained density profile of a typical individual $N$-particle system of this type will evolve towards a spatially homogeneous distribution and that, after having achieved such a spatially homogeneous macrostate, it will again be observed to be in
that state for a very long sequence of times, provided $N$ is sufficiently large.
We stress again the feature of our results that is crucial to fully implementing Boltzmann's scenario: they concern the evolution of typical individual $N$-particle systems, and not just the average behaviour associated with a statistical ensemble.  Our main result improves upon previous work by providing  estimates, exponential in $N$, of probabilities and time scales involved.
Moreover, as our proofs use little detailed information on the one-particle dynamics, our results generalize readily to other geometries, as explained in Section~\ref{s:examples}.

Another model that has been often used to illustrate the validity of Boltzmann's scenario is the
Kac ring model~\cite{Ka59, Th72, Br96,GoOl09, MaNeSh09}, in which $N$ black or white balls, in discrete time
steps,  move in the same direction around a ring, on which there are $m$ randomly-placed obstacles that
reverse the color of each ball that passes.
The microscopic dynamics of the model is deterministic, time-periodic (hence not ergodic) and time-reversible. It is
ideal, therefore, for illustrating how irreversibility emerges in the macroscopic limit.
In previous works on this model, e.g., it has been established that, on average, the fractional difference between the number of black and white balls tends to zero for times that are large, but smaller than the period (which depends on $N$), and that the variance of this observable also vanishes in this limit.

This, however, does not completely establish the validity of Boltzmann's full scenario, even for
this simple model system.
Indeed, it only establishes the behaviour of ensemble averaged quantities and does not prove that this irreversible
approach to equilibrium emerges for any individual system that starts from one of an overwhelmingly numerous set
of typical initial states. Nor does it show that, once so-equilibrated, such a system will almost certainly remain
in equilibrium for times that are, like $N$ itself, exceedingly large. As in our treatment of the expanding gas, we focus in the present paper on this latter situation, and in particular we prove in Theorem~\ref{thm:KacRing} that it does. 
 For the Kac ring, we improve upon previously obtained algebraic bounds for the deviation from unity of relevant equilibration probabilities, by providing (sub)-exponential bounds for them of the form  $\exp (-\gamma N^{1-\alpha})$.

\begin{figure}
\centering
\captionsetup{width=0.85\textwidth}
\includegraphics[height =1.5in]{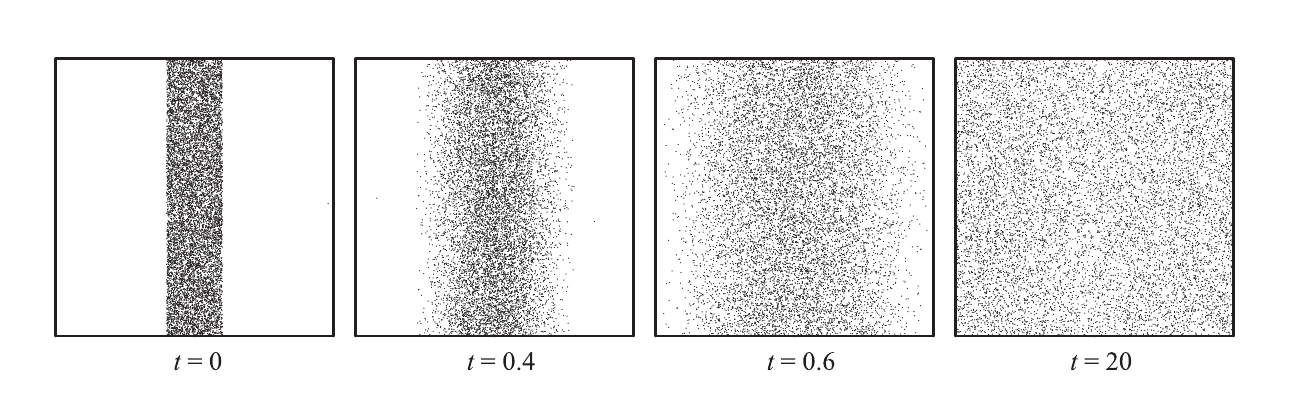}
\caption{Numerical simulation of the free expansion on the 2-torus of a non-interacting gas of $10^4$ particles having a thermal distribution of momenta with mean thermal speed equal to unity,  initially confined in the horizontal direction to the region $0.4 <x < 0.6$, at the sequence of times indicated.}
\label{fig:expand}
\end{figure}

The paper is organized as follows. In Section~\ref{s:freeexpansion} we
describe our model of the expanding gas and give an informal mathematical
statement of our main result.
In Section~\ref{s:meanvariance} we provide a first set of
results that show that an ensemble of non-interacting freely-expanding gases will,
on average, develop a flat density profile.
We then present in Section~\ref{s:typical} a more precise definition of those microstates
that exhibit ``typical'' behaviour, after which we state and prove our main result on the expanding
gas. Section~\ref{s:examples} contains numerical examples illustrating the main theorem along with a discussion of its natural generalizations to other geometries.
Section~\ref{s:Kacring} contains entirely analogous results on the Kac ring model.
Further discussion and a summary is provided in Section~\ref{s:discussion}.

\noindent\textbf{Acknowledgements} S.D.B. is supported by the Labex CEMPI (ANR-11-LABX-0007-01) and by the Nord-Pas de Calais Regional Council and the Fonds Europ\'een de D\'eveloppement \'Economique R\'egional (grant CPER Photonics for Society). P.E.P. thanks the University of Lille and the Labex CEMPI, where part of this work was performed, for their hospitality.  The authors thank H. Spohn for bringing the work of J.~Beck to their attention, and the latter for communicating his recent unpublished work to them.

\section{A freely expanding gas}\label{s:freeexpansion}
Consider particles moving on a $d$-dimensional torus $\T^d$, identified with the
cube $[0,1]^d$ in what follows. The phase space of one particle
is $\T^d\times \R^d$ and that of $N$ particles is
$$
\Gamma_N=\left(\T^d\times\R^d\right)^N.
$$
A point in the phase space of the $N$-particle system will be written
$(X,P)=(x_{i},p_{i})_{i=1,\dots N}$. The dynamics is that of uniform
rectilinear motion: the particles are not subject to any force and do not
interact.

Initial conditions are constructed by
first choosing an a priori initial probability measure $\mu$ on $\Gamma_{1}$, and
then drawing, for each of the $N$ particles, initial phase space points
$(x_{i},p_{i})$ independently from this probability measure. The only randomness in
the model is in this choice of the initial condition $\left(  X,P\right)  $
drawn from the product measure induced by $\mu$ on $\Gamma_{N}$.

The focus of our analysis will be on the following coarse-grained observables for the system. Given any measurable subset $I$ of the configuration space $\T^d$, we consider the fraction
\begin{equation}
f_{I}(X,P,t)=\frac{1}{N}\sum_{i=1}^{N}\chi_{I}(x_{i}+p_{i}t),
\label{eq:Ifraction}%
\end{equation}
of particles inside $I$, which is expressed above in terms of the periodicized characteristic function $\chi_{I}$ of
$I$; the latter vanishes everywhere except at points $y$ such that $\{y\}\in I$,
where it is equal to unity,  and where  $\{y\}$ denotes the fractional part
of $y$ taken componentwise.
Letting $\{I_\alpha  |  \alpha = 1,\dots, L\}$
be a partition of the configuration space,
we associate a coarse-grained time-dependent density profile $\rho_\alpha(t) = Nf_{I_\alpha} (X,P,t))/ |I_{\alpha}|$
to each initial state $(X,P)\in\Gamma_N$.

Our main result is as follows: given an initial probability measure $\mu$, satisfying suitable conditions spelled out in the next section, we show that for an overwhelming majority of initial conditions, there exists a relatively short time $t_0$ depending only on $\mu$, and a comparatively long time $t_N$ that grows exponentially with  $N$, such that, for all $\alpha$, the fraction $f_{I_\alpha}$ of particles that finds itself in $I_\alpha$ at times $t \in [t_0, t_N]$ is close to the measure $|I_\alpha|$ of that subset.
Such initial conditions, the probability of which we will show to differ from unity by an amount that is exponentially small in $N$, we will call ``typical''.

In other words, for typical initial conditions, the $N$ particles are, after a relatively brief equilibration time, uniformly distributed on the torus after which they will then be observed to be so for a very long time.
In this sense, the gas exhibits an approach to equilibrium.

We only consider out-of-equilibrium initial states in which the particles are uncorrelated. In general out-of-equilibrium initial states, one expects there to be correlations between the particles. For some such states, equilibration will not occur, for example if all velocities are identical and aligned with an axis of the torus. We have not dealt with the more general question of identifying the general class of initial states for which equilibration does indeed occur.

\section{Mean values}\label{s:meanvariance}
We start our analysis by studying the mean and the variance of the coarse-grained
observables $f_I$ as a function of $I,N$, and $t$ in order to establish, at the outset, the asymptotic values that these observables are expected to approach at sufficiently long times. The main result of this Section is stated in Proposition~\ref{lem:L1L2conv}.

We will obtain results for two classes of the initial probability measure $\mu$. We first consider the case where
\begin{equation}\label{eq:rhoconditions1}
\rd \mu(x,p)=\rho(x,p)\rd x\rd p,\quad \rho\in L^1(\R^d, L^2(\T^d)),
\end{equation}
so that, in particular, $\mu$ is absolutely continuous with respect to Lebesgue measure. In addition, we need the density $\rho$ to be
sufficiently smooth in the momentum variables and to satisfy the following condition:
\begin{eqnarray}\label{eq:rhoconditions}
\exists r > d/2, \forall k\in \N, 1\leq k\leq 2r,\quad
\int_{\Gamma_1}\mid \partial_p^{k}\rho(x,p)\mid\rd x\rd p<+\infty.
\end{eqnarray}
This allows, e.g., for a product distribution $\rho(x,p) = \chi_{I_0}(x)\rho_{{\mathrm p}}(p)$ that
has all particles initially confined to a measurable subset $I_0$ of the torus,
but which has a sufficiently smooth (e.g., thermal) momentum distribution. Obviously,
this is a generalization of the situation that takes place in an already thermalized
gas when the volume to which it is constrained is suddenly expanded.

To proceed, we first need the following simple estimate, which is the only place where the one-particle dynamics plays a role. We will write
\begin{equation}\label{eq:chiIt}
\chi_{I,t}(x,p)=\chi_I(x+pt),\quad \forall (x,p)\in \Gamma_1.
\end{equation}
\begin{lemma} \label{lem:oneparticle}
Suppose $\rho$ is as above. Then there exists a constant $C_r$ such that, for all measurable subsets $I$ of $\T^d$ and for all $t\in\R_*$,
\begin{equation}\label{eq:expestimate}
\mid \E(\chi_{I,t})-|I|\mid\leq C_\mu t^{-2r}.
\end{equation}
where $C_\mu=C_r\|\partial_p^{2r}\rho\|_1$.
\end{lemma}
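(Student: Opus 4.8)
The plan is to reduce everything to Fourier analysis on the torus together with a single non-stationary-phase integration by parts in the momentum variable, the torus flow $x\mapsto x+pt$ entering only through the phase $\e^{2\pi\mathrm{i} t\,k\cdot p}$ it produces.

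First I would rewrite the expectation in terms of the $x$-marginal of the transported density. Since $|\chi_{I,t}|\le 1$ and $\rho$ is a probability density everything is absolutely integrable, so Fubini permits, for each fixed $p$, the measure-preserving change of variables $y=x+pt$ (mod $1$) in the $x$-integral; extending $\rho$ periodically in its first argument this gives $\E(\chi_{I,t})=\int_{\T^d}\chi_I(y)F(y,t)\,\rd y$ with $F(y,t):=\int_{\R^d}\rho(y-pt,p)\,\rd p$. Translation invariance of Lebesgue measure on $\T^d$ yields $\int_{\T^d}F(\cdot,t)=\int_{\Gamma_1}\rho=1$ and $\|F(\cdot,t)\|_{L^1(\T^d)}\le\|\rho\|_1=1$. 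As $|\chi_I|\le 1$ and $\int_{\T^d}\chi_I=|I|\le 1$, and since $2r$ is even, it then suffices to prove $\|F(\cdot,t)-1\|_{L^\infty(\T^d)}\le C_\mu\,t^{-2r}$, because $\mid\E(\chi_{I,t})-|I|\mid=\mid\int_{\T^d}\chi_I(y)\bigl(F(y,t)-1\bigr)\,\rd y\mid\le |I|\,\|F(\cdot,t)-1\|_\infty$.

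Next I would estimate the Fourier coefficients $\hat F(k,t)=\int_{\T^d}\e^{-2\pi\mathrm{i} k\cdot y}F(y,t)\,\rd y$, $k\in\Z^d$. Undoing the change of variables, $\hat F(k,t)=\int_{\R^d}\e^{-2\pi\mathrm{i} t\,k\cdot p}\,h_k(p)\,\rd p$, where $h_k(p)=\int_{\T^d}\e^{-2\pi\mathrm{i} k\cdot z}\rho(z,p)\,\rd z$ is the $k$-th $x$-Fourier coefficient of $\rho(\cdot,p)$; in particular $\hat F(0,t)=1$. For $k\ne 0$ I integrate by parts $2r$ times along the direction $k$: using $(k\cdot\nabla_p)\e^{-2\pi\mathrm{i} t\,k\cdot p}=-2\pi\mathrm{i} t\,|k|^2\,\e^{-2\pi\mathrm{i} t\,k\cdot p}$ and the fact that $h_k$ and its $p$-derivatives up to order $2r$ lie in $L^1(\R^d)$ — so the boundary terms vanish and differentiation under the $\T^d$-integral is legitimate, by condition \eqref{eq:rhoconditions} — one obtains
\begin{equation*}
\hat F(k,t)=\frac{1}{(-2\pi\mathrm{i} t\,|k|^2)^{2r}}\int_{\R^d}\e^{-2\pi\mathrm{i} t\,k\cdot p}\,(k\cdot\nabla_p)^{2r}h_k(p)\,\rd p .
\end{equation*}
Pushing the derivatives back under the $\T^d$-integral, expanding $(k\cdot\nabla_p)^{2r}$ by the multinomial theorem and bounding $(|k_1|+\dots+|k_d|)^{2r}\le d^{\,r}|k|^{2r}$, this gives $|\hat F(k,t)|\le d^{\,r}(2\pi|t|)^{-2r}|k|^{-2r}\,\|\partial_p^{2r}\rho\|_1$, where $\|\partial_p^{2r}\rho\|_1$ stands for the largest $L^1(\Gamma_1)$-norm among the order-$2r$ momentum derivatives of $\rho$. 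Summing over $k\ne 0$, and here invoking the hypothesis $r>d/2$, the lattice sum $S_{r,d}:=\sum_{k\in\Z^d\setminus\{0\}}|k|^{-2r}$ converges, so $F(\cdot,t)$ agrees a.e.\ with the absolutely convergent Fourier series $1+\sum_{k\ne 0}\hat F(k,t)\e^{2\pi\mathrm{i} k\cdot y}$ and
\begin{equation*}
\|F(\cdot,t)-1\|_{L^\infty(\T^d)}\le\sum_{k\ne 0}|\hat F(k,t)|\le\frac{d^{\,r}S_{r,d}}{(2\pi)^{2r}}\,\|\partial_p^{2r}\rho\|_1\,t^{-2r}.
\end{equation*}
Together with the reduction above this proves \eqref{eq:expestimate} with $C_r=d^{\,r}S_{r,d}/(2\pi)^{2r}$ and $C_\mu=C_r\,\|\partial_p^{2r}\rho\|_1$.

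The argument is not deep; its real content is the choice of reduction — to the periodized, momentum-averaged profile $F$ — and the observation that the power $2r$ produced by integrating by parts $2r$ times is tuned precisely so that $\sum_{k\ne 0}|k|^{-2r}$ converges under the stated condition $r>d/2$. The only genuine technical nuisance is justifying the integration by parts without boundary contributions and the interchange of differentiation with the $\T^d$-integration; I expect this to be the main place requiring care, and the cleanest way to dispatch it is to first establish \eqref{eq:expestimate} for $\rho$ smooth with compact momentum support (where everything is classical) and then extend to general $\rho$ satisfying \eqref{eq:rhoconditions} by density, noting that both sides of \eqref{eq:expestimate} are continuous in $\rho$ with respect to $\|\rho\|_{L^1(\Gamma_1)}+\|\partial_p^{2r}\rho\|_1$ — the left side because replacing $\rho$ by $\rho'$ changes $\E(\chi_{I,t})$ by at most $\|\rho-\rho'\|_{L^1(\Gamma_1)}$, since $|\chi_{I,t}|\le 1$.
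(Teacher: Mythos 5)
Your proof is correct and is essentially the paper's argument in different packaging: your $\hat F(k,t)$ is precisely the paper's $\hat\rho(2\pi k,2\pi k t)$, your $2r$-fold integration by parts in $p$ is exactly how the paper's asserted bound $|\hat\rho(a,b)|\le \|b\|^{-2r}\|\partial_p^{2r}\rho\|_1$ is obtained, and both proofs conclude by summing the lattice series $\sum_{k\ne 0}\|k\|^{-2r}$, convergent since $r>d/2$. The only additions on your side are the detour through $\|F(\cdot,t)-1\|_\infty$ instead of pairing directly with the Fourier coefficients $\tilde\chi_\ell$ (with $|\tilde\chi_\ell|\le 1$), and an explicit justification of the boundary-term-free integration by parts, which the paper leaves implicit.
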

\noindent Here
\begin{equation*}
\E (\chi_{I,t})=\int_{\Gamma_1} \chi_I(x+pt) \rd \mu,
\end{equation*}
and so, in particular,
\begin{equation}\label{eq:sortamixing}
\lim_{t\to+\infty} \int_{\Gamma_1} \chi_I(x+pt) \rd \mu = |I| .
\end{equation}

To put this result in perspective, recall that, with respect to Lebesgue measure, for almost every $p\in\R^d$, the flow $x\to x+pt$ on the $d$-torus is ergodic. Hence, for  almost every $(x,p)\in \Gamma_1$,
\begin{equation}\label{eq:ergodic1particle}
\lim_{T\to+\infty}\frac1T\int_0^T \chi_I(x+pt)\rd t=|I|.
\end{equation}
This statement means that, for almost every initial
condition, a single free particle  spends approximately a
fraction $|I|$ of its time inside $I.$ It is therefore true pointwise in $(x,p),$ but averaged
in time. Equation~\eqref{eq:sortamixing}, on the other hand, holds pointwise in $t$, but averaged in $(x,p)$.

Although the free dynamics of the single particle is not mixing in
$\Gamma_{1}$, the limit in~\eqref{eq:sortamixing} has a ``mixing'' flavour to
it. Indeed, the map $I\rightarrow\mathbb{E}(\chi_{I,t})\in\lbrack0,1]$ defines
for each $t\in\mathbb{R}$ a measure on $\T^{d}$ and
Lemma~\ref{lem:oneparticle} shows that these measures converge to Lebesgue
measure as $t\rightarrow+\infty$. Notice that the estimate in~\eqref
{eq:expestimate} deteriorates when the density becomes more concentrated and
breaks down for any fixed initial condition $\rho(x,p)=\delta
(x-x_{0},p-p_{0})$, as does~\eqref{eq:sortamixing}. We will come back to this point below, when dealing with the second class of measures for which our results hold (See~\eqref{eq:productmeasure}-\eqref{eq:deltacondition}).

\begin{proof}
Let $\rho$ be as above, and define, for $a\in (2\pi\Z)^d, b\in\R^d$,
$$
\hat\rho(a,b)=\int_{\Gamma_1} \exp(-i(ax+bp))\rho(x,p)\rd x\rd p.
$$
Then, for all non-zero $b\in\R^d$,
$
|\hat\rho(a,b)|\leq \|b\|^{-2r}\|\partial_p^{2r}\rho\|_1.
$
Now, defining, $\forall \ell\in\Z^d$,
$$
\tilde\chi_\ell=\int_{\T^d}\exp(i2\pi\ell x) \chi_I(x)\rd x,
$$
we have
\begin{eqnarray}
\E(\chi_{I,t})&=&\int_{\Gamma_1}\chi_I(x+pt)\rho(x,p)\rd x\rd p=\sum_{\ell\in\Z^d}\tilde \chi_\ell \hat\rho(2\pi\ell, 2\pi \ell t)\nonumber\\
&=&\tilde\chi_0 + \sum_{\ell\not=0} \tilde \chi_\ell \hat\rho(2\pi\ell, 2\pi \ell t).\label{eq:plancherel}
\end{eqnarray}
Hence, using the fact that $|\tilde \chi_\ell|\leq 1$, \eqref{eq:expestimate} follows.
\end{proof}
We now turn to the $N$-particle problem. Let
\begin{equation}\label{eq:fIt}
f_{I,t}(X,P)=f_I(X,P,t),\quad \forall (X,P)\in\Gamma_N.
\end{equation}
 We will denote by $\E_N$ the expected value with respect to the product probability measure induced by $\mu$ on $\Gamma_N$. A first formulation of the approach to a uniform spatial density for the gas is then as follows.\\
\begin{proposition} \label{lem:L1L2conv} Let  $\mu$ satisfy~\eqref{eq:expestimate}. Then, for all measurable subsets $I$ of the one-particle configuration space $\T^d$, we have
\begin{eqnarray}
\forall N\in\N_*,\ \lim_{t\to+\infty}\E_N(f_{I,t}-|I|)&=&0,\label{eq:convL1}\label{eq:gasexpansionmean}\\
\lim_{N\to+\infty,t\to+\infty}\E_N\left((f_{I,t}-|I|)^2\right)&=&0.\label{eq:convL2}
\end{eqnarray}
The limits are uniform in $I$.
\end{proposition}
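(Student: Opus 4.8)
The plan is to reduce the whole statement to the one-particle estimate~\eqref{eq:expestimate} by exploiting the fact that the initial data $(x_i,p_i)_{i=1,\dots,N}$ are independent and identically distributed with common law $\mu$. Since $f_{I,t}(X,P)=\frac1N\sum_{i=1}^N\chi_{I,t}(x_i,p_i)$ is an average of $N$ such i.i.d.\ random variables, linearity of expectation gives at once $\E_N(f_{I,t})=\E(\chi_{I,t})$, so that $\E_N(f_{I,t}-|I|)=\E(\chi_{I,t})-|I|$. Applying~\eqref{eq:expestimate} bounds the absolute value of this quantity by $C_\mu t^{-2r}$, which tends to $0$ as $t\to+\infty$ for every fixed $N\in\N_*$; because $C_\mu$ is independent of $I$, this convergence is uniform in $I$, which establishes~\eqref{eq:convL1}.

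For~\eqref{eq:convL2} I would use the bias–variance decomposition
\begin{equation*}
\E_N\big((f_{I,t}-|I|)^2\big)=\V_N(f_{I,t})+\big(\E_N(f_{I,t})-|I|\big)^2 .
\end{equation*}
The squared-bias term is at most $C_\mu^2\, t^{-4r}$ by the estimate just used. For the variance, independence decouples the sum once more: $\V_N(f_{I,t})=\frac1{N^2}\sum_{i=1}^N\V(\chi_{I,t})=\frac1N\V(\chi_{I,t})$, and since $\chi_{I,t}$ is $\{0,1\}$-valued one has $\V(\chi_{I,t})\le\frac14$, whence $\V_N(f_{I,t})\le\frac1{4N}$. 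Adding the two contributions gives $\E_N\big((f_{I,t}-|I|)^2\big)\le\frac1{4N}+C_\mu^2\,t^{-4r}$, which vanishes in the joint limit $N\to+\infty,\ t\to+\infty$, and again uniformly in $I$ since every constant involved is $I$-independent.

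There is no real obstacle here: the proposition is a soft corollary of Lemma~\ref{lem:oneparticle} together with the product structure of the initial measure, and the only points deserving a line of care are tracking the $I$-independence of all constants (so the limits are genuinely uniform) and being explicit about which term needs which limit. It is worth flagging, in anticipation of the next section, that the $t\to+\infty$ limit is needed only to kill the bias $\E_N(f_{I,t})-|I|$, whereas the fluctuation term is already controlled by the crude Chebyshev-level bound $\frac1{4N}$; this is precisely why the present second-moment argument, while sufficient for convergence of ensemble averages, must later be upgraded to a Hoeffding-type concentration estimate in order to control \emph{typical individual} systems simultaneously over a long window of times $[t_0,t_N]$.
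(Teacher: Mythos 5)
Your proposal is correct and follows essentially the same route as the paper: identify $f_{I,t}$ as an average of i.i.d.\ copies of $\chi_{I,t}$, use the bias--variance decomposition $\E_N\bigl((f_{I,t}-|I|)^2\bigr)=\frac1N\V(\chi_{I,t})+\bigl(\E(\chi_{I,t})-|I|\bigr)^2$, and control the bias by the one-particle estimate~\eqref{eq:expestimate}. The only (immaterial) difference is your sharper bound $\V(\chi_{I,t})\le\frac14$ where the paper simply uses $\V(\chi_{I,t})\le 1$.
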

\begin{proof}
Since the initial conditions $(x_i, p_i)_{i=1\dots N}$ are drawn independently from the single-particle probability $\mu$, the course-grained observable $f_I(X,P,t)$ is a sum of \emph{i.i.d.} random variables. One then trivially obtains, for all $N$ and $t$
\begin{eqnarray}
\E_N\left(f_{I,t}-\E(\chi_{I,t})\right)&=&0\nonumber\\
\E_N\left((f_{I,t}-\E(\chi_{I,t}))^2\right)&=&\frac1N\V(\chi_{I,t}),\label{eq:variance}
\end{eqnarray}
where
\begin{equation*}
\V(\chi_{I,t})=\int_{\Gamma_1} (\chi_I(x+pt) - \E(\chi_{I,t}))^{2} \rd\mu,
\end{equation*}
so that, using~\eqref{eq:expestimate}, $\forall N\in\N_*, \forall t>0$,
\begin{eqnarray}
\mid\E_N(f_{I,t}-|I|)\mid&=&\mid \E(\chi_{I,t})-|I|\mid\nonumber\\
&\leq& C_\mu t^{-2r},\label{eq:control1}\\
\E_N\left((f_{I,t}-|I|)^2\right)&=&\frac{1}{N}\V(\chi_{I,t})+\left(\E(\chi_{I,t}-|I|)\right)^2\nonumber\\
&\leq& \frac1N + \left( C_\mu t^{-2r}\right)^2,\label{eq:control2}
\end{eqnarray}
where we have used the fact that $\V(\chi_{I,t})\leq 1$ for all $t$ and $I$. This implies~\eqref{eq:convL1} and~\eqref{eq:convL2}.
\end{proof}
We now present a second class of measures $\mu$ on $\Gamma_1$ for which~\eqref{eq:expestimate} and hence Proposition~\ref{lem:L1L2conv} can be proven to hold.  Le $\mu$ be a product measure
\begin{equation}\label{eq:productmeasure}
\rd \mu(x,p)=\rd \nu(x)\rho_{\mathrm p}(p)\rd p ,
\end{equation}
where $\nu$ is an arbitrary probability measure $\rd \nu$ on $\T^d$ and where $\rho_{\mathrm{p}}$ is a one-particle momentum distribution. Introducing
$$
\hat\rho_{\mathrm{p}}(b)=\int_{\R^d}\exp(-ibp)\rho_{\mathrm{p}}(p)\rd p,
$$
we will suppose there exists constants $C$, $r>d/2$, and $s>d$ such that
\begin{equation}\label{eq:rhopcondition}
\hat\rho_{\mathrm{p}}(b)\leq C\|b\|^{-2r},\qquad |\rho_{\mathrm{p}}(p)|\leq C(1+\|p\|)^{-s}.
 \end{equation}
 This, e.g.,
allows the application of our results to situations such that
\begin{equation}\label{eq:deltacondition}
\rd \nu = \delta(x-a_0)\rd x ,
\end{equation}
in which all particles are initially located at the same position $a_0$ and in which, therefore,
the only randomness that appears is in the choice of the initial momenta.

We now show how to establish~\eqref{eq:expestimate} for this class of measures.
The constant $C_\mu$ will depend on $C,r$, and $s$.
For that purpose, it is in turn sufficient to establish the analog of~\eqref{eq:plancherel}, i.e.,
\begin{equation}\label{eq:planchereltris}
\E(\chi_{I,t})=\sum_{\ell\in\Z^d}\tilde\chi_\ell\tilde\nu(\ell)\hat\rho_{\mathrm{p}}(2\pi\ell t),
\end{equation}
where
$$
\tilde\nu(\ell) = \int_{\T^d}\exp(i2\pi\ell x) \rd \nu.
$$

To that end, we consider a sequence $\chi_I^{(n)}$ of $C^{\infty}$ functions that approximate $\chi_I$ in  the $L^2$-norm
\begin{equation}\label{eq:smoothapprox}
\lim_{n\to+\infty} \|\chi_I^{(n)}-\chi_I\|_2=0.
\end{equation}
Writing $\tilde\chi_\ell^{(n)}$ for their Fourier coefficients, it follows that $\lim_{n\to+\infty}\tilde\chi_\ell^{(n)}=\tilde\chi_{\ell}$
and the smoothness of the $\chi_I^{(n)}$ implies that the Fourier coefficients
decay fast in $\ell$. Then, for all $x\in\T^d$,
$$
\chi_I^{(n)}(x)=\sum_{\ell} \tilde\chi_{\ell}^{(n)}\exp(-i2\pi \ell x).
$$
The convergence of this series is uniform in $x$ because of the fast decay of the Fourier coefficients. Then, for all $t\in\R$,
\begin{eqnarray}
\E(\chi_{I,t})&=&\int_{\T^d}\left(\int_{\R^d} \chi_I(x+pt))\rho_{\mathrm{p}}(p) \rd p\right)\rd\nu(x)\nonumber\\
&=&\int_{\T^d}\left(\int_{\R^d} \left(\chi_I(x+pt)-\chi_I^{(n)}(x+pt)\right)\rho_{\mathrm{p}}(p) \rd p\right)\rd\nu(x)\nonumber\nonumber\\
&\ &\qquad+\int_{\T^d}\left(\int_{\R^d} \chi_I^{(n)}(x+pt))\rho_{\mathrm{p}}(p) \rd p\right)\rd\nu(x).\nonumber\\\label{eq:chiapprochebis}
\end{eqnarray}
Let $C_\ell$, $\ell\in\Z^d$ be the family of cubes centered at $\ell\in\Z^d$ and of linear size $1$. Then we can write, for all $x\in\T^d$,
\begin{eqnarray*}
|\int_{\R^d} \left(\chi_I(x+pt)-\chi_I^{(n)}(x+pt)\right)\rho_{\mathrm{p}}(p)\rd p|=\qquad\qquad\\
\qquad\qquad\frac1{t^d}|\sum_\ell\int_{C_\ell}\left(\chi_I(x+y)-\chi_I^{(n)}(x+y)\right)\rho_{\mathrm{p}}(y/t)\rd y|\\
\leq\frac1{t^d}\|\chi_I-\chi_I^{(n)}\|_2\sum_\ell\left(\int_{C_\ell}\rho_{\mathrm{p}}^2(y/t)\rd y\right)^{1/2}\qquad\\
\leq \tilde{C}_{s,t}\|\chi_I-\chi_I^{(n)}\|_2.\qquad\qquad\qquad\qquad\qquad\quad
\end{eqnarray*}
Inserting this into~\eqref{eq:chiapprochebis}, one finds, for all $t\in\R$,
$$
\E(\chi_{I,t})=\lim_{n\to+\infty}\int_{\T^d}\left(\int_{\R^d} \chi_I^{(n)}(x+pt)\rho_{\mathrm{p}}(p) \rd p\right)\rd\nu(x)
$$
Hence
$$
\E(\chi_{I,t})=\lim_{n\to+\infty}\sum_{\ell\in\Z^d}\tilde\chi_\ell^{(n)}\tilde\nu(\ell)\hat\rho_{\mathrm{p}}(2\pi\ell t).
$$
Using condition~\eqref{eq:rhopcondition} and bounded convergence, one finds finally~\eqref{eq:planchereltris} and hence~\eqref{eq:expestimate}.

\begin{figure}[t]
\captionsetup{width=0.85\textwidth}
\includegraphics[height=3in]{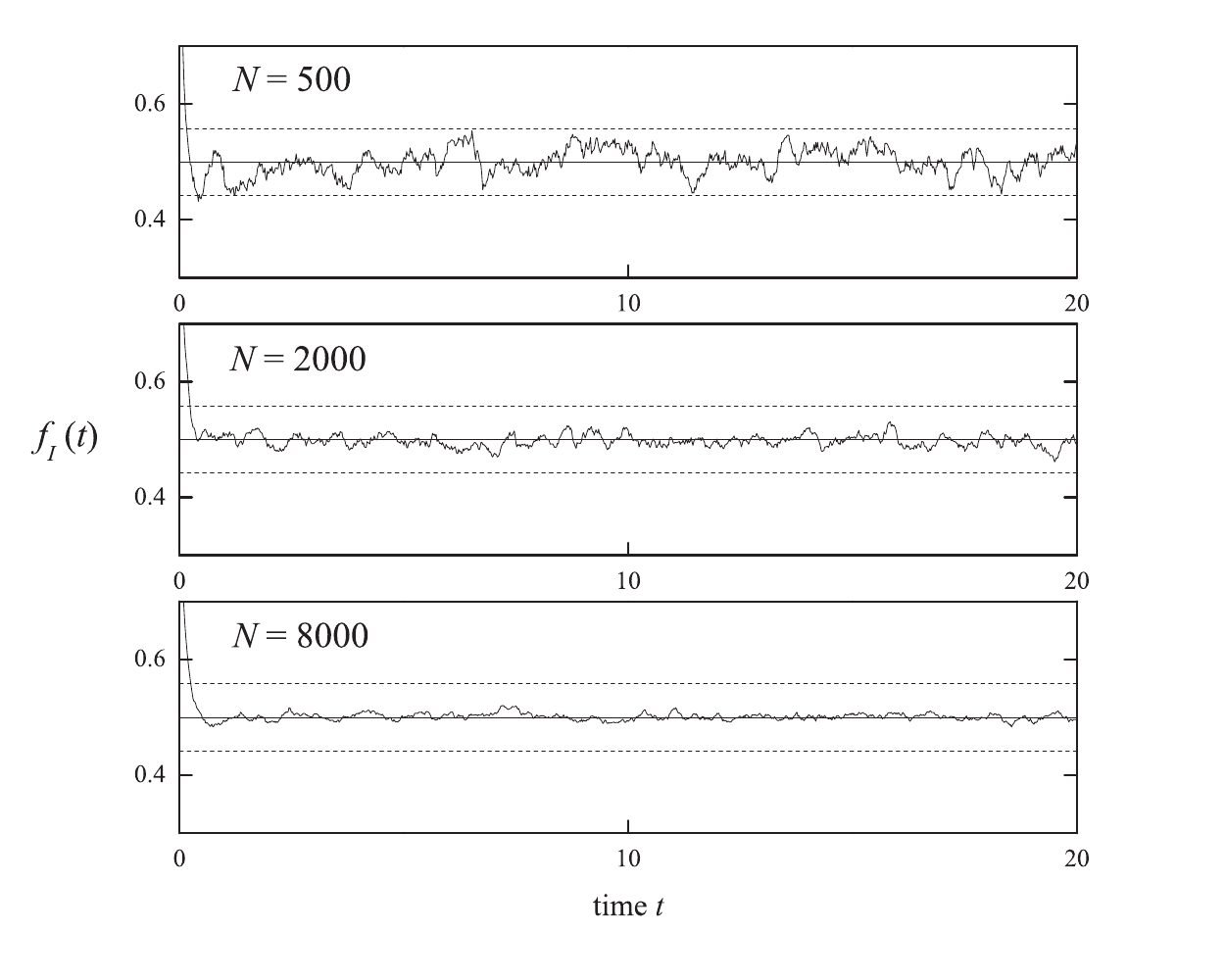}
\centering
\caption{Evolution over time of the fraction $f_{I}(X,P,t)$ of $N$ particles (with values of $N$ as
indicated) located at time $t$ in the interval $I = [0,0.5]$ for a single initial
condition $(X,P)$ of the gas. To choose this initial condition, each of
the $N$ particle positions was independently chosen  from a uniform distribution in $[0, 0.5]$
and  each of their momenta was independently chosen from a thermal distribution with mean
thermal speed equal to unity.}
\label{fig:fluctuations1}
\end{figure}

The model we study here is a variation of one that was introduced in~\cite{Fr58},
where a result equivalent to~\eqref{eq:gasexpansionmean} is proven. In~\cite{Be10} (Theorem~3) a result similar to~\eqref{eq:convL2} is proven on a closely related model.
 Equations~\eqref{eq:gasexpansionmean}-\eqref{eq:convL2} mean
that for large enough $N$ and $t$, and for all $I\subset\T^{d}$, the
coarse-grained random variable $f_{I}$ has a mean close to $|I|$ and a very small
variance. This implies that for any partition $\{I_\alpha | \alpha\in\{1,\dots, L\}\} $
of the configuration space, the coarse-grained time-dependent density
profile $\rho_\alpha(t) = Nf_{I_\alpha} (X,P,t))/ |I_{\alpha}|$ is, \emph{on average}, and for sufficiently large $N$ and $t$, close to being spatially homogeneous.
This provides a weak sense of the approach to spatial homogeneity. Indeed, the idea is that when a random
variable has a very small variance it is ``almost constant''. In that sense, one may claim that $f_{I_\alpha}(X,P,t)$ is ``typically'' equal to $|I_\alpha|$, provided $N$ and $t$ are large enough.
Note however
that~\eqref{eq:convL1}-\eqref{eq:convL2} do not as such
give any information about the approach of the coarse-grained observables $f_{I_\alpha}(X,P,t)$ to $|I_\alpha|$ for any
particular initial condition $(X,P)\in\Gamma_{N}$, i.e., about the evolution of individual macroscopic
systems.
Figure~\ref{fig:fluctuations1}, on the other hand, illustrates for single trajectories of individual $N$-particle systems, how typical fluctuations in $f_I$ about its equilibrium value decrease with increasing particle number $N$, but do not decay to zero as a function of time. In this figure, individual trajectories, after a relatively brief initial equilibration time that does not appear to depend on $N$, display values of $f_I$ that fluctuate about the expected mean value of  $0.5$.

Note that, according to~\eqref{eq:gasexpansionmean}, the average of $f_I - |I|$ over many such trajectories will approach zero with increasing $t$. On the other hand, as is evident from the data in Fig.~2, and as is confirmed below in~\eqref{eq:control2}, the variance of that quantity at fixed $N$ approaches, for large times, a finite value, which can be estimated from the size of the temporal fluctuations appearing in the figure.  Numerical results evaluated at times very much longer than the relatively short initial equilibration period, i.e., at times much longer those shown in Fig.~2, reveal fluctuations in $f_I$ statistically equivalent to those appearing in that figure.

For these simulations, which were performed with one-dimensional systems with particle numbers $N$ as indicated, the individual initial conditions for each trajectory displayed were chosen so that all $N$ particles were randomly located in the interval $I=[0,0.5]$, with momenta independently chosen from a thermal distribution with average value equal to unity. The value of $f_I$ displayed in these figures  is then
the fraction of particles in that same interval as a function of time.

Note that the estimates~\eqref{eq:expestimate} and~\eqref{eq:variance} can already be used to obtain some information on initial trajectories, via the Markov inequality, in the spirit of a quantitative weak law of large numbers. To see this, let us introduce for all measurable $I$ and $\epsilon$, and for all $t>0$, and $N\in\N_*$ the set
\begin{equation}\label{eq:goodset}
\Omega_I(\epsilon, N, t)=\{(X, P)\in\Gamma_N\mid |f_I(X,P,t)-|I|\mid\leq \epsilon\}
\end{equation}
of initial conditions for which the fraction of particles that are inside $I$ at time $t$ is close to $|I|$ when $\epsilon$ is small. For any subset $\Omega$ of the $N$ particle phase space, we will denote by $P(\Omega)$ the probability that an initial condition $(X,P)$, drawn from the product probability measure induced by $\mu$, lies in $\Omega$. One then obtains readily that, for large enough $t$, depending on $\epsilon$ and $\mu$, but not on $N$
$$
P(\Omega_I(\epsilon, N, t))\geq 1-\frac{4}{\epsilon^2N}.
$$
However, the control in $N^{-1}$ on this probability is unsatisfactory.
A more detailed mathematical analysis of the main features of $f_I$ for individual trajectories, as observed in Fig.~\ref{fig:fluctuations1}, will be provided in the next section, where an exponential estimate on the above probability will be obtained.

\section{Typical behaviour}\label{s:typical}
To fully implement Boltzmann's scenario we now show that for an overwhelming majority of initial conditions drawn from
the product probability measure generated by $\mu$, there exists an exponentially long sequence of times at which the fraction $f_{I_\alpha}$ of particles in $I_\alpha$ is arbitrarily close to the measure $|I_\alpha|$ of that subset, for all $\alpha =1, \dots, L$. Here, the $I_\alpha$ form a partition of the torus, as described in Section~\ref{s:freeexpansion}.

To state our result, we introduce a sequence of times $t_k=t_0+k\Delta t$, $k=1,\dots , K$, for some $\Delta t$ that  is macroscopically small enough that a series of observations of the system at these times could be considered quasi-continuous. The total duration over which the system is observed is therefore $\Delta T=t_K-t_0=K\Delta t$. Define, then, the set
$$
\Omega_{I, K} =  \cap_{k=1}^{K}\ \Omega_I(\epsilon, N, t_k),
$$
of ``good'' initial conditions in which the fraction $f_{I}$ of particles in $I$ is close to $|I|$ at each instant of time in the sequence. Note that we have suppressed the $\epsilon$ and $N$ dependence of $\Omega_{I, K}$.  Let $P(\Omega_{I, K})$
denote the probability that an initial condition, chosen as described, lies in $\Omega_{I, K}$.  We will show, see~\eqref{eq:BoltzmannScenario}, that for values of $K=K_N$ that are exponentially large in $N$, the
probability $P(\Omega_{I, K_N})$ differs from unity by an amount that is exponentially small in $N$.

\begin{theorem}\label{thm:gasexpansion}
Let $\mu$ be a probability measure on $\Gamma_1$  and suppose there exists $C_\mu>0$,
and $r>0$ so that
\begin{equation}\label{eq:expestimategeneral}
|\E(f_{I,t})-|I||\leq C_\mu t^{-2r}.
\end{equation}
Let $$
t_0=\epsilon^{-\frac1{2r}} D_\mu,\quad\mathrm{where}\quad D_\mu=\left(2C_\mu\right)^{\frac1{2r}}.
$$
Then, for all $K\in\N_*$,
$$
P(\Omega_{I,K})\geq 1-2K\exp(-\frac12\epsilon^2N).
$$
In particular, if $K_N=\frac12\exp(\frac{\epsilon^2}{4}N)$, then
\begin{equation}\label{eq:BoltzmannScenario}
P(\Omega_{I, K_N})\geq 1-\exp(-\frac14\epsilon^2N).
\end{equation}
\end{theorem}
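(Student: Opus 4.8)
The plan is to combine the exponential concentration of the i.i.d.\ sum $f_{I,t}$ around its mean (a Hoeffding-type bound) with the deterministic estimate~\eqref{eq:expestimategeneral} controlling the distance of that mean from $|I|$, and then to take a union bound over the $K$ observation times. First I would fix a measurable $I$, an $\epsilon>0$, and a time $t>0$, and recall from the proof of Proposition~\ref{lem:L1L2conv} that $f_{I,t}(X,P)=\frac1N\sum_{i=1}^N\chi_{I,t}(x_i,p_i)$ is an average of $N$ i.i.d.\ random variables taking values in $[0,1]$, with mean $\E(\chi_{I,t})$. Hoeffding's inequality then gives, for any $\eta>0$,
\begin{equation*}
P\big(|f_{I,t}-\E(\chi_{I,t})|>\eta\big)\leq 2\exp(-2\eta^2 N).
\end{equation*}

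Next I would relate the event defining $\Omega_I(\epsilon,N,t)$ to this concentration event. By~\eqref{eq:expestimategeneral}, $|\E(\chi_{I,t})-|I||=|\E(f_{I,t})-|I||\leq C_\mu t^{-2r}$, so whenever $t\geq t_0=\epsilon^{-1/2r}D_\mu$ with $D_\mu=(2C_\mu)^{1/2r}$ we have $C_\mu t^{-2r}\leq C_\mu t_0^{-2r}=C_\mu\cdot\epsilon/(2C_\mu)=\epsilon/2$. Hence for such $t$, the triangle inequality shows that $|f_{I,t}-\E(\chi_{I,t})|\leq\epsilon/2$ implies $|f_{I,t}-|I||\leq\epsilon$, i.e.\ $(X,P)\in\Omega_I(\epsilon,N,t)$. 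Applying Hoeffding with $\eta=\epsilon/2$ gives, for every $k$ (so that $t_k\geq t_0$),
\begin{equation*}
P\big(\Omega_I(\epsilon,N,t_k)^c\big)\leq P\big(|f_{I,t_k}-\E(\chi_{I,t_k})|>\tfrac\epsilon2\big)\leq 2\exp(-\tfrac12\epsilon^2 N).
\end{equation*}

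Then I would take a union bound: $\Omega_{I,K}^c=\cup_{k=1}^K\Omega_I(\epsilon,N,t_k)^c$, so $P(\Omega_{I,K}^c)\leq\sum_{k=1}^K P(\Omega_I(\epsilon,N,t_k)^c)\leq 2K\exp(-\tfrac12\epsilon^2 N)$, which is exactly the claimed bound $P(\Omega_{I,K})\geq 1-2K\exp(-\tfrac12\epsilon^2 N)$. For the final assertion, substituting $K_N=\tfrac12\exp(\tfrac{\epsilon^2}{4}N)$ gives $2K_N\exp(-\tfrac12\epsilon^2 N)=\exp(\tfrac{\epsilon^2}{4}N-\tfrac12\epsilon^2 N)=\exp(-\tfrac14\epsilon^2 N)$, establishing~\eqref{eq:BoltzmannScenario}.

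The only real subtlety — rather than a genuine obstacle — is making sure that every $t_k$ in the sequence $t_k=t_0+k\Delta t$ satisfies $t_k\geq t_0$, which holds since $\Delta t>0$ and $k\geq1$; this is what guarantees the deterministic bound $C_\mu t_k^{-2r}\leq\epsilon/2$ is available at each observation time, so the same Hoeffding estimate applies uniformly across the union bound. I would also note that all estimates are uniform in $I$, since Hoeffding's bound depends only on the range $[0,1]$ of $\chi_{I,t}$ and the hypothesis~\eqref{eq:expestimategeneral} is assumed uniform in $I$; no further properties of the partition $\{I_\alpha\}$ enter at this stage. The one place a reader might want more care is the justification of Hoeffding's inequality itself for bounded i.i.d.\ variables, but this is entirely standard and can simply be cited.
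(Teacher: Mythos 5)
Your proposal is correct and follows essentially the same route as the paper: Hoeffding's inequality with deviation $\epsilon/2$ for the i.i.d.\ average $f_{I,t}$, the triangle inequality via the deterministic bound $C_\mu t^{-2r}\leq\epsilon/2$ for $t\geq t_0$, a union bound over the $K$ observation times, and the substitution of $K_N$. No discrepancies of substance to report.
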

Note that it follows from~\eqref{eq:BoltzmannScenario} that
\begin{equation}\label{eq:BoltzmannScenariobis}
P(\cap_{\alpha=1}^L\Omega_{I_\alpha, K_N})\geq 1-L\exp(-\frac14\epsilon^2N).
\end{equation}
In Section~\ref{s:meanvariance} we identified two classes of measure for which condition~\eqref{eq:expestimategeneral} is satisfied.
For the proof of Theorem~\ref{thm:gasexpansion}, we will utilize a particular case of Theorem~2 proven in~\cite{Ho63}:
\begin{theorem}\label{thm:largedev} Let $X_i$, $i=1\dots N$ be a family of independent identically distributed random variables
with $0\leq X_i\leq 1$ and $\E(X_i)=m$. Let $\epsilon>0$. Then
$$
P(\mid \frac1N\sum_{i=1}^N X_i-m\mid\geq \epsilon)\leq 2\exp(-2\epsilon^2N).
$$
\end{theorem}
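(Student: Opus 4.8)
The plan is to establish this as the classical Hoeffding bound via the exponential-moment (Chernoff) method; no feature of the gas model or of the measure $\mu$ enters, since the statement is purely about bounded i.i.d.\ random variables. First I would reduce to a one-sided estimate. Write $S_N=\sum_{i=1}^N X_i$ and $Y_i=X_i-m$, so that $\E(Y_i)=0$ and $-m\leq Y_i\leq 1-m$. It suffices to bound $P\big(S_N-Nm\geq N\epsilon\big)$, because the lower tail $P\big(S_N-Nm\leq -N\epsilon\big)$ is obtained by applying the same argument to the variables $1-X_i$ (equivalently, to $-Y_i$), and the two contributions are combined by a union bound, which is exactly what produces the overall factor $2$.

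For the one-sided bound I would apply the Markov inequality to the exponentiated sum: for every $\lambda>0$,
$$
P\big(S_N-Nm\geq N\epsilon\big)=P\big(\e^{\lambda(S_N-Nm)}\geq \e^{\lambda N\epsilon}\big)\leq \e^{-\lambda N\epsilon}\,\E\big(\e^{\lambda(S_N-Nm)}\big).
$$
By independence and identical distribution, the last expectation factorizes as $\big(\E(\e^{\lambda Y_1})\big)^{N}$. The key input (Hoeffding's lemma) is that for a centered random variable $Y$ with $a\leq Y\leq b$ one has $\E(\e^{\lambda Y})\leq \exp\!\big(\lambda^2(b-a)^2/8\big)$; here $b-a=1$, so $\E(\e^{\lambda Y_1})\leq \e^{\lambda^2/8}$. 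Substituting gives $P\big(S_N-Nm\geq N\epsilon\big)\leq \exp\!\big(-\lambda N\epsilon+N\lambda^2/8\big)$, and minimizing the quadratic in $\lambda$ by taking $\lambda=4\epsilon$ yields the exponent $-2\epsilon^2 N$, hence the claimed bound $2\exp(-2\epsilon^2 N)$ after adding the two tails.

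The main work, and the step I expect to be the real obstacle, is the proof of Hoeffding's lemma. I would argue by convexity: since $y\mapsto \e^{\lambda y}$ is convex, for $y\in[a,b]$ one has $\e^{\lambda y}\leq \frac{b-y}{b-a}\,\e^{\lambda a}+\frac{y-a}{b-a}\,\e^{\lambda b}$; taking expectations and using $\E(Y)=0$ gives $\E(\e^{\lambda Y})\leq \frac{b}{b-a}\,\e^{\lambda a}-\frac{a}{b-a}\,\e^{\lambda b}=:\e^{\psi(\lambda)}$. One then checks $\psi(0)=\psi'(0)=0$, and that $\psi''(\lambda)$ equals the variance of a $\{a,b\}$-valued random variable, which is at most $(b-a)^2/4$ for every $\lambda$; Taylor's theorem with remainder then gives $\psi(\lambda)\leq \lambda^2(b-a)^2/8$. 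Everything else in the argument is elementary, so once this lemma is in place the theorem follows by the chain of inequalities above.
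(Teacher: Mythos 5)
Your proof is correct: the reduction of the two-sided bound to two one-sided bounds via a union bound, the Chernoff/Markov step with the factorization of the exponential moment, Hoeffding's lemma $\E(\e^{\lambda Y})\leq \exp(\lambda^2(b-a)^2/8)$ proved by convexity plus the bound $\psi''\leq (b-a)^2/4$, and the optimization $\lambda=4\epsilon$ giving the exponent $-2\epsilon^2N$ are all standard and sound. Note, however, that the paper does not prove this statement at all: it simply quotes it as a particular case of Theorem~2 of Hoeffding~\cite{Ho63}, so there is no internal proof to compare against; your argument is in substance Hoeffding's original one, and the same exponential-moment lemma you prove is exactly the ingredient (Hoeffding's inequality (4.16)) that the paper later invokes in the proof of Theorem~\ref{thm:KacRing}. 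So your proposal supplies, correctly and self-containedly, the proof the paper delegates to the literature.
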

\noindent With this result, then:\\
{\em Proof of Theorem~\ref{thm:gasexpansion}.} It follows from~\eqref{eq:expestimategeneral} that,
 for all $t>t_0$,
 $$
\mid \E(\chi_{I,t})-|I|\mid\leq \epsilon/2.
$$
It follows that, for all $t>t_0$, $(X,P)\in\Omega_I^c(\epsilon, N, t)$ implies that
$$
\mid f_I(X,P,t)-\E(\chi_{I,t})\mid\geq \frac\epsilon2.
$$
Hence, by Theorem~\ref{thm:largedev},
\begin{equation}\label{eq:estimatefixed_t}
P(\Omega_I^c(\epsilon, N, t))\leq P\left(\mid f_I(X,P,t)-\E(\chi_{I,t})\mid\geq \frac\epsilon2\right)\leq 2\exp(-\frac12\epsilon^2N),
\end{equation}
and hence
\begin{equation}\label{eq:P(good)}
P(\cap_{k=1}^{K_N}\Omega_I(\epsilon, N, t))\geq 1- 2K_N\exp(-\frac12\epsilon^2N),
\end{equation}
which proves the result.
{\hfill\ensuremath{\square}}\\
Note that the equilibration time $t_0$ is independent of $N$, as observed also in the numerics presented in Section~\ref{s:meanvariance}. It is directly linked to the estimate in~\eqref{eq:expestimate} and in no clear way related to the ergodic convergence time in~\eqref{eq:ergodic1particle}, which is strongly $p$-dependent.

After we completed an initial version of this paper, Prof.~J.~Beck kindly communicated to us recent unpublished work~\cite{Be17} in which he proves, using detailed dynamical information, combinatorial estimates, and Fourier analysis,  a result similar to Theorem~\ref{thm:gasexpansion}, for initial conditions
$$
\mu_N(x,p)=C_{N}\delta(x_1-a_1)\dots \delta(x_N-a_N)\exp(-\sum_i \|p_i\|^2),
$$
in which the particle positions are initially non-random, as in~\eqref{eq:deltacondition}, and their velocities are specifically drawn from a Maxwell distribution.

\section{Examples and generalizations }\label{s:examples}
\subsection{Examples}
The plot in Fig.~\ref{fig:Fig3} illustrates in one dimension the exponential scaling with $N$ and the linear scaling with $K$ of the fluctuations implied by~\eqref{eq:P(good)}.
It shows the fraction $P_\epsilon(N)$ (normalized by $K$) of $M = 3\times 10^7$ initial conditions, randomly chosen as described below, that exhibited a deviation in the
quantity $|f_I - |I||$, for $I = [0,0.5]$, of magnitude greater than or equal to $\epsilon = 0.04$, at least once over a sequence of fixed times $t_k = k\Delta t$, with $t_0 = \Delta t = 10$ and $k = 1,\dots ,K$, for values of $K=1,5,10,15,20$, and $25$.  For each initial condition, particle positions were independently and uniformly chosen from the same interval $I$, and momenta were independently chosen from
a thermal distribution with a mean thermal velocity of unity. The six closely bunched dotted curves include the fraction of such initial conditions for each value of $K$ indicated, excluding those (at large $N$ and small $K$) for which no such fluctuations were obtained during the $M = 3\times 10^7$ gas histories, over the time scale investigated. Open circles indicate an average over $K$. The dot-dashed line near the top of the figure represents the function $P_\epsilon(N) /K = 2\exp(-\frac12\epsilon^2N)$ that, according to~\eqref{eq:P(good)}, serves as an upper bound for the normalized probability.  The solid line that follows the numerical data for large $N$ is a fit to the data for $N\geq 2000$ of the form $P_{\epsilon}(N)/K = a \exp(-b\epsilon^2 N)$, with $a=0.14$ and $b=2.38$, showing that the exponential bound implied by~\eqref{eq:P(good)} is a conservative one.
\begin{figure}
\includegraphics[height =3in]{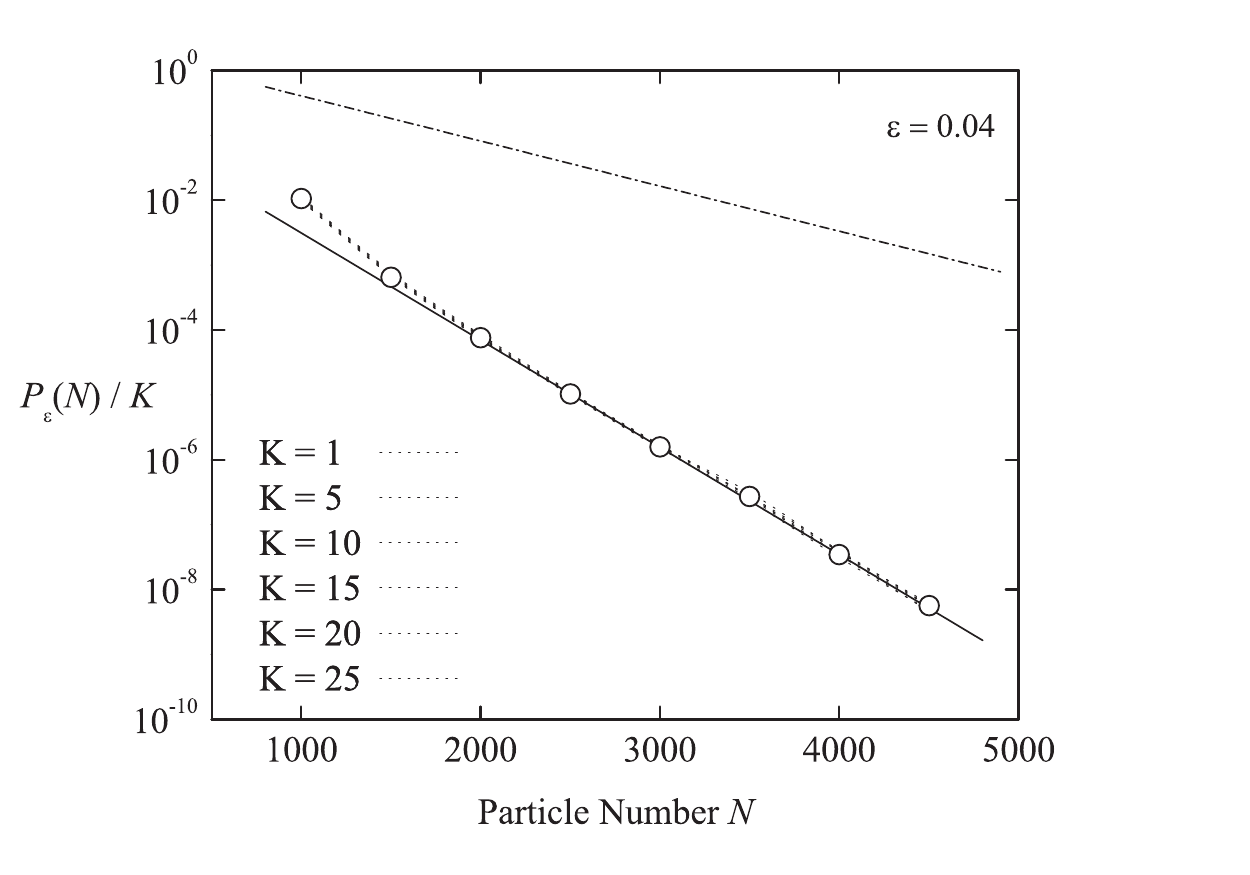}
\captionsetup{width=0.85\textwidth}
\centering
\caption{Scaling with $N$ of the probability $P_\epsilon(N)$ of a fluctuation from equilibrium as described in the text.}
\label{fig:Fig3}
\end{figure}

In fact, the proof of~\eqref{eq:P(good)} can easily be adapted to show that, for all $0<\eta<1$, for $t_0(\eta)=\epsilon^{-\frac1{2r}}\left(\frac1{\eta}C_\mu\right)^{\frac1{2r}}$, and for all $K\in\N_*$, one has
\begin{equation}\label{eq:P(goodbis)}
P(\cap_{k=1}^{K}\Omega_I(\epsilon, N, t))\geq 1- 2K\exp(-2(1-\eta)^2\epsilon^2N).
\end{equation}
In the proof above, we used $\eta=\frac12$. If, on the other hand, one takes $\eta$ small, one gets a bound close to $\exp(-2\epsilon^2N)$, much closer to what is observed numerically. For example, with $\epsilon=0.04$ and $\eta=10^{-2}$, one finds $t_0(\eta)\simeq 20$ and $2(1-\eta)^2\simeq 1.96$. Constraints on computing power make it unfeasible to do simulations with values of $N$ or $K$ much larger than those used here to illustrate the theorem. For example, with $N=8 000$ and $\epsilon=0.04$, the probability for finding a single fluctuation is of order $\exp(-2(0.04)^2N)\simeq 7.6\times 10^{-12}$. Identification of such a fluctuation would require the generation of  $10^{12}$ or more different initial conditions, exceeding realistic computational capabilities.

A realistic numerical simulation showing the applicability of Theorem~\ref{thm:gasexpansion} to actual macroscopic systems being, therefore, out of the question, we instead now use the theorem to obtain simple estimates for a more-or-less realistic hypothetical macroscopic system.

Thus, we consider in three dimensions a gas cell occupying a cubic domain $D$ of total volume $|D| =1$~cm$^{3}$
that is filled with an ideal gas at standard temperature and pressure. The equilibrium particle number density of
such a gas is well-known to be of the order $n_{0}\sim 3\times 10^{19}$/cm$^{3}$.
Thus, ignoring fluctuations for the moment, let us assume the actual number of gas particles
in the cell to be
\begin{equation*}
N=n_{0}\left\vert D \right\vert =3\times 10^{19}.
\end{equation*}

Consider, next, a cubic sub-region $D _{\alpha }$, lying entirely within $D$,
and having a linear dimension one-tenth that of the total domain $D$.
It occupies, therefore, a volume $\left\vert D _{\alpha }\right\vert =$ $1~$%
mm$^{3}$, i.e., a fraction $\left\vert I_{\alpha
}\right\vert =\left\vert D _{\alpha }\right\vert /\left\vert D
\right\vert =10^{-3}$ of the total volume of the gas cell.

In thermal equilibrium at temperature $T$, the average number
\begin{equation*}
\langle N_{\alpha }\rangle =n_{0}\left\vert D _{\alpha }\right\vert
=3\times 10^{16}
\end{equation*}
of particles in such a sub-region corresponds to a fraction $f_{\alpha }=\langle
N_{\alpha }\rangle /N=|I_{\alpha }|=10^{-3}$ of all the particles in the
system, while the average pressure in this sub-region, being an intensive quantity,
is the same as in the bulk, i.e.,
\begin{equation*}
p_{eq}=\frac{N}{|D|}kT=\frac{\langle N_{\alpha }\rangle } {|D_{\alpha }| }kT=n_{0}kT
\end{equation*}%
In what follows, we will denote by
\begin{equation*}
p_{\alpha }\left( t\right) =\frac{N_{\alpha }(t)}{\left\vert D _{\alpha
}\right\vert }kT=n_{\alpha }\left( t\right) kT=p_{eq}\frac{f_{\alpha }\left(
t\right) }{|I_{\alpha}|}
\end{equation*}%
the instantaneous pressure in $D _{\alpha }$, which is
reduced or increased according to the number of particles $N_{\alpha }(t)$
in this sub-region of interest at time $t$, and in which
$f_{\alpha }\left(t\right) = N_{\alpha }(t)/N$. We will also denote by
\begin{equation*}
\pi _{\alpha }\left( t\right) =p_{\alpha }\left(
t\right) /p_{eq}
\end{equation*}
the ratio of the instantaneous pressure in $D_{\alpha}$ to
its equilibrium value.

For the purposes of illustrating the consequences of our main theorem, suppose it
were possible to make localized
pressure measurements over millimeter length scales in a gas of the sort
described above, to a relative accuracy of $5$ ppm, i.e., $\delta_{\pi}
=5\times 10^{-6}=0.0005\%$.

Suppose, furthermore, that all $N$ particles were, e.g., by a movable rigid membrane, adiabatically compressed
to a region of the cell that contains the millimeter-cubed sub-region $D_{\alpha}$ described above, but which
itself occupies only half of the total cell volume $|D|$. The pressure in this half of the cell would therefore have increased to two atmospheres, while the pressure in the other half vanishes.
Imagine, finally, that the membrane confining the particles, initially, to this region were to suddenly burst, allowing for a free expansion of the compressed gas into the full volume $|D |$ of the cell.

Starting from an arbitrary initial condition drawn from the product distribution implied by the initial conditions described,
Theorem~\ref{thm:gasexpansion} implies that at a time $t$ greater than a relatively
short equilibration time $t_{0}$ defined in the theorem, the probability that
the relative pressure $\pi _{\alpha }\left( t\right)$ in $D_{\alpha}$ will be found to
exhibit a deviation from unity by a measurable amount $\delta_{\pi}$ is bounded from above by the
relation
\begin{equation*}
P\left( \left\vert \pi _{\alpha }\left( t\right) -1\right\vert >\delta_\pi
\right)  \leq \exp \left( -2\varepsilon ^{2}N\right) ,
\end{equation*}
in which $\varepsilon = |I_{\alpha}|\delta_{\pi} =5\times10^{-9}$.
Using this value for $\varepsilon $ we thus obtain the numerical bound
\begin{eqnarray*}
P\left( \left\vert \pi _{\alpha }\left( t\right) -1\right\vert >\delta\pi
\right)&\leq &\exp \left( -2\times (25\times 10^{-18})\times\left( 3\times 10^{19}\right) \right)  \\
&=&\exp \left( -1500\right)  \\
&<&10^{-650} .
\end{eqnarray*}

As indicated, this absurdly small bound applies to the probability of
observing such a deviation at a single instant of time $t>t_{0}.$

Consider then the probability $P_{K}$ of observing a deviation of this magnitude
at least once at some point over a quasi-continuous sequence
of $K$ times $t_{k}=t_{0}+k\Delta t,$ with $k\in \{0,\ldots ,K\}$, that
span a total time interval $\Delta T = K\Delta t$.   According to~\eqref{eq:P(goodbis)}
this quantity is bounded by the relation
\begin{equation*}
P_{K} \leq 2K\exp \left( -2\varepsilon ^{2}N\right)\leq 2K10^{-650},
\end{equation*}
where we ignored the factor $(1-\eta)$ which can at any rate be taken arbitrarily close to $1$.

Of course this upper bound on $P_{K}$ depends implicitly on the time $\Delta t$
between successive measurements and the duration $\Delta T$ of the total time
interval considered. As an extreme limit, suppose it were possible to obtain a
measurement of the pressure in this gas cell once every femtosecond
over a total time interval
$\Delta T \sim 3\times 10^{9}~\textrm{years}\sim 10^{32}~\textrm{fs}$.

Then, over the entire sequence of $K \sim 10^{32}$ such measurements, spanning
an interval of time roughly comparable to the age of the earth, the probability $P_{K}$
of observing at least one measurable deviation in the relative pressure of this
gas would be less than $2\times 10^{32}\times 10^{-650}$, giving the numerical bound
\begin{equation*}
P_{K} \leq 2\times10^{-618}.
\end{equation*}
This example shows
that the main theorem we have proven confirms the conventional wisdom regarding the
approach of individual macroscopic systems to a state of thermodynamic equilibrium, i.e.,
that it will almost certainly happen, and that once it has, the macroscopic system will almost
certainly remain in the equilibrium macrostate state for an extraordinarily long duration of time.

At the same time, the theorem itself can, in principle, be used as a
tool to provide insight into situations in which it might be possible to actually
measure deviations from the equilibrium state.

To illustrate this second aspect of our result we consider a situation similar to
that just outlined in which the initial particle gas is not at standard temperature
and pressure, but is contained in a vacuum chamber maintained at a pressure of $10^{-6}$ atm at room temperature. In that case the number of particles in
the $1\mathrm{cm}^3$ chamber would be of the order of $N= 2\times 10^{13}$. If in addition we take a more realistic value of $\delta_\pi=5\times 10^{-4}$ so that $\epsilon=5\times 10^{-7}$. One then finds
\begin{eqnarray*}
P\left( \left\vert \pi _{\alpha }\left( t\right) -1\right\vert >\delta_\pi
\right)&\leq &\exp \left( -2\times (25\times 10^{-14})\times\left( 2\times 10^{13}\right) \right)  \\
&=&\exp \left( -10\right)  \\
&<&4.5\times 10^{-5} .
\end{eqnarray*}
If one now is able to measure the pressure in this gas every second for an hour,
then with $K=3600$ one finds $$P_K\leq 2\times3600\times4.5\times10^{-5}=0.33,
$$
which suggests the possibility of detecting measurable fluctuations at
the rate of one every three hours.

The presence of the factor $\epsilon^2N$ in the exponent appearing in our main results clearly indicates that fluctuations that regularly occur in the gas density are of order $1/\sqrt N$, as expected. Indeed, this can already be observed in Fig.~2. The first example above illustrates that such small fluctuations are not easily measured in a macroscopic system of this sort under normal ambient circumstances. The second example suggests that it should be possible to produce laboratory conditions in which $\epsilon$ is indeed of the order $1/\sqrt N$. Fluctuations of this sort should, therefore, be observable.

We note in passing that experimentalists working on cold atom systems are regularly able to generate atomic gases containing on the order of $10^{9}$ atoms, confined to millimeter-cubed sized regions. Moreover, it is also possible through the application of magnetic fields to tune the inter-particle scattering length of some atomic species to zero, thus producing a gas of effectively non-interacting particles. The local particle density $n_{0} (t)$ in such a gas can then be monitored through fluctuations in the intensity of a finely collimated laser beam passing through it.

\subsection{Generalizations}
 As already pointed out in the introduction, because our proofs use very little information on the one-particle dynamics, Theorem~\ref{thm:gasexpansion} is easily generalized to other geometries. As a first set of examples, suppose that the particles move in a configuration space $M$, which can be a bounded set with boundary in $\R^d$ or a manifold. In the first case we suppose the particle executes a billiard motion, in the second that it moves along geodesics. We write $t\to (x(t), p(t))$ for the corresponding particle trajectories. Given a one-particle probability measure $\mu$ on the corresponding one-particle phase space $\Gamma_1=T^*M\simeq M\times\R^d$, suppose one can prove there exists a probability  measure $\lambda$ on $M$ such that
\begin{equation}
\lim_{t\to+\infty}\int_{\Gamma_1} \chi_I(x(t),p(t))\ \rd \mu=\lambda(I).
\end{equation}
This is the equivalent of~\eqref{eq:sortamixing}.
Once such a result is obtained, the rest of the argument leading to Theorem~\ref{thm:gasexpansion} goes through unaltered, with $|I|$ replaced by $\lambda(I)$.

This will be the case, for example, if $M$ is a negatively curved manifold of finite volume, so that the geodesic flow is mixing with respect to the Liouville measure on an energy surface. One can then consider any initial probability measure $\mu$ which is absolutely continuous with respect to the latter, and the probability measure $\lambda$ is then the normalized Riemannian volume element on $M$. Note that it  does therefore not depend on the initial probability measure $\mu$. In other words, particles initially distributed according to $\mu$  in phase space, spatially homogenize according to the measure $\lambda$ on $M$, with overwhelming probability and for long periods of time.

A similar situation occurs if $M$ is a chaotic billiard. It also occurs when $M$ is a cubic billard: the standard trick of ``unfolding'' single trajectories (see for example~\cite{Be10}) reduces this case to straight line motion on a torus, which is explicitly treated in this paper.

But other measures $\lambda$ may arise that do depend on $\mu$. For example, if $M$ is a disk in two dimensions, and if we take $\mu$ supported close to the boundary, with a distribution of momenta such that there is a minimal impact parameter, then none of the particles will enter a small disk close to the origin. This will not happen with a physical ideal gas, of course, since then the collisions of the gas particles with each other will change their direction. Such collisions have not been accounted for in the model considered here and become critically important in, e.g., the situation in which an external potential is added to the one particle Hamiltonian. Consider for example noninteracting particles bouncing back and forth in the closed interval $[0,1]$, subject to a constant force $F>0$. Such particles will spend less time close to the right edge of the interval than to the left edge, since they move faster on the right than on the left. This is clearly not what happens in a physical gas where interparticle collisions allow for equipartition of the energy. Indeed, in a physical gas
the mean speed of the particles is constant throughout, and particles spend more time in the region of lower potential energy, on the right rather than on the left. Results of the present paper, obtained for independent, non-interacting particles, clearly rely heavily on the statistical independence of those particles, which implies a law of large numbers. Thus, while they provide useful insight into the mechanism of spatial homogenization for this class of models, it is not clear that similar techniques could be usefully applied to the much more difficult, physically relevant case, in which the particles interact and exchange energy.

\section{The Kac ring model}\label{s:Kacring}
The Kac ring model, introduced in~\cite{Ka59}, describes the dynamics of $N$ particles placed on $N$ equidistant sites on a circle, one particle per site. There are, at each integer time $t$, $N_b(t)$ black balls and $N_w(t)$ white balls, so that $N=N_b(t)+N_w(t)$. Among the $N$ sites, $0\leq m\leq N$ are marked. The system undergoes a discrete-time deterministic dynamics, defined as follows. At each integer time $t$, all balls move one step counterclockwise. A ball will change color if it starts on a marked site, and not otherwise. For all $0\leq n<N$, $\xi_n=-1$ if site $n$ is marked, otherwise $\xi_n=1$. Also, $\eta_n(t)=1$ if at time $t$ the ball on site $n$ is white, and $\eta_n(t)=-1$ otherwise. One therefore has the following dynamics
$$
\eta_n(t)=\xi_{n-1}\eta_{n-1}(t-1).
$$
For the sake of the discussion here, we will suppose $\eta_n(0)=1$, so all balls are white initially. The indices are taken modulo $N$.
We are interested in the evolution of the macroscopic variable
$$
\Delta(t)=N_w(t)-N_b(t), \quad \overline{\Delta(t)}=\frac{\Delta(t)}{N}.
$$
We have
\begin{equation}\label{eq:Xndef}
\Delta(t)=\sum_{n=0}^{N-1} X_{n,t}, \quad\mathrm{where}\quad X_{n,t}=\xi_{n-1}\xi_{n-2}\dots \xi_{n-t}.
\end{equation}
We will study this quantity for choices of marked sites constructed as follows. We consider the $\xi_n$ as independent Bernoulli random variables
with, for some $0<\mu\leq 1$,
\begin{equation}
P(\xi_n=-1)=\mu,\quad P(\xi_n=1)=1-\mu.
\end{equation}
We will show that with an overwhelmingly large probability, and for $t$ within an appropriate $N$-dependent time range, the random variable $\Delta(t)$ tends to zero. This generalizes, as we will explain, known results on the model, recalled below and proven in~\cite{Ka59}; see also~\cite{GoOl09}. We first remark that, provided $0\leq t\leq N$,
$$
\E(\overline{\Delta(t)})=(1-2\mu)^t,
$$
since for $1\leq t\leq N$,
\begin{equation}\label{eq:Xnexpected}
\E(X_{n,t})=\E(\xi_{n-1}\xi_{n-2}\dots \xi_{n-t})=(1-2\mu)^t.
\end{equation}
This follows since all factors $\xi_k$ occurring in the product are independent. Hence, in analogy with~\eqref{eq:convL1}
\begin{equation}\label{eq:convL1Kac}
\lim_{\stackrel{N,t\to+\infty}{t<N}}\E\left(\overline{\Delta(t)}\right)=0.
\end{equation}
In addition, it is proven in~\cite{Ka59, GoOl09}, that
\begin{equation}\label{eq:Kacvariance}
\V(\overline{\Delta(t)})\leq \frac{C_\mu}{N}, \quad \forall t<\frac{N}2.
\end{equation}
Hence, in analogy with~\eqref{eq:convL2}, we have here
\begin{equation}\label{eq:convL2Kac}
\lim_{\stackrel{N,t\to +\infty}{t<\frac{N}2}}\E\left(\left(\overline{\Delta(t)}\right)^2\right)=0.
\end{equation}
So, in the Kac ring model, the difference between the number of white and black balls, as a fraction of the total number of balls, tends on average to zero for large $N$ and $t$.  These statements are the analogues of~\eqref{eq:gasexpansionmean}-\eqref{eq:convL2} in the expanding gas. Note however the restriction $t<N$ in~\eqref{eq:convL1Kac}, which is a consequence here of the fact that $\Delta(N)=(-1)^m\Delta(0)$, where $m$ is the number of markers. Hence $\Delta(2N)=\Delta(0)$, so that the system is time-periodic, with the same period for all configurations of markers. In addition, the variance of $\overline{\Delta(t)}$ tends to zero on the slightly shorter time scale $t<N/2$. As in the case of the expanding gas, these two statements do not suffice to say that ``typical'' configurations of markers will result in the system converging to the equilibrium value, meaning $\overline{\Delta(t)}\to 0$ with ``overwhelming'' probability. Of course, Markov's inequality together with the bound in~\eqref{eq:Kacvariance} can be used to yield a weak law of large numbers and a lower bound on this probability, but only one algebraic in $N$. As an example of such an approach, it is proven in~\cite{MaNeSh09} that, for all $\epsilon>0$ and  $T$ fixed,
$$
\lim_{N\to+\infty}P(\forall 0\leq t\leq T, |\overline{\Delta(t)}-(1-2\mu)^t|\leq \epsilon)=1.
$$
We now show how to apply a further result of~\cite{Ho63} to obtain a (sub-) exponential estimate.
For that purpose, we introduce, as before, for each $\epsilon>0$, for each $t$ and $N$,  the ``good'' set of markers $\xi_1,\dots, \xi_N$,
\begin{equation}\label{eq:Kacgoodsets}
\Omega(\epsilon, N, t)=\{(\xi_1, \dots, \xi_N)\in\{-1,1\}^N\mid\, \mid\overline{\Delta(t)}\mid\leq \epsilon\}
\end{equation}
for which the difference between the number of white and black balls, as a fraction of $N$, is within $\epsilon$ from its equilibrium value $0$ at the time $t$. Also
\begin{equation}
\Omega_T=\cap_{t=1}^T \Omega(\epsilon, N, t).
\end{equation}
Our main result is then:
\begin{theorem}\label{thm:KacRing} Let $\epsilon>0$ and $0<\alpha<1$. Let $N_\epsilon=\left(\frac{4}{\epsilon}\right)^{\frac{1}{1-\alpha}}$, $t_0=(\epsilon/4)\ln|1-2\mu|$ and $t_N=\frac12N^{\alpha}$. Then, for all $N>N_\epsilon$, for all $t_0\leq t\leq t_N$, one has
\begin{equation}\label{eq:bound2}
P(\Omega(\epsilon, N, t))\geq 1-2\exp(2(\frac{\epsilon}{2})^2)\exp(-2(\frac{\epsilon}{2})^2N^{1-\alpha}).
\end{equation}
Hence
\begin{equation}
P(\Omega_{t_N})\geq 1- \exp(2(\frac{\epsilon}{2})^2)N^{\alpha}\exp(-2(\frac{\epsilon}{2})^2N^{1-\alpha}).
\end{equation}
\end{theorem}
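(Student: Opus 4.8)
The plan is to follow the same three-step template as in the proof of Theorem~\ref{thm:gasexpansion} --- center, concentrate, union-bound --- with the concentration step carrying all the new difficulty. Centering is immediate: by~\eqref{eq:Xnexpected}, $\E(\overline{\Delta(t)})=(1-2\mu)^t$ for $1\le t\le N$, and $t_0$ is chosen precisely so that $|1-2\mu|^t\le|1-2\mu|^{t_0}$ is a fixed small fraction of $\epsilon$ (at most $\epsilon/4$, say) for all $t\ge t_0$. Hence, for $t_0\le t\le t_N$, the event $\{|\overline{\Delta(t)}|>\epsilon\}$ is contained in $\{\,|\overline{\Delta(t)}-\E(\overline{\Delta(t)})|>\tfrac34\epsilon\,\}$, and it suffices to prove a concentration bound for $\overline{\Delta(t)}$ about its mean.

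The obstruction in the concentration step is that the summands $X_{n,t}=\xi_{n-1}\cdots\xi_{n-t}$ in $\Delta(t)=\sum_{n=0}^{N-1}X_{n,t}$ are \emph{not} independent --- $X_{n,t}$ and $X_{n',t}$ share $\xi$-factors whenever $n$ and $n'$ are at circular distance $<t$ --- so Theorem~\ref{thm:largedev} does not apply directly to the family $\{X_{n,t}\}_n$. The device I would use (this being the ``further result of~\cite{Ho63}'': the convexity argument Hoeffding employs for sampling without replacement) is the following. Fix $t\le t_N=\tfrac12N^\alpha$ and split $\{0,\dots,N-1\}$ into the $t$ residue classes modulo $t$: along a single class the variables $X_{r,t},X_{r+t,t},X_{r+2t,t},\dots$ are built from pairwise disjoint blocks of $t$ consecutive $\xi$'s, hence are i.i.d.\ $\{-1,1\}$ variables of mean $(1-2\mu)^t$, and, after discarding at most $t$ wrap-around ``boundary'' terms (which contribute only $O(t/N)=O(N^{\alpha-1})$ to $\overline{\Delta(t)}$, and this is $<\epsilon/8$ because $N>N_\epsilon$ forces $N^{1-\alpha}>4/\epsilon$), each class supplies at least $q:=\lfloor N/t\rfloor\ge\lfloor 2N^{1-\alpha}\rfloor$ of them. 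Passing to $Y_{n,t}=(1+X_{n,t})/2\in\{0,1\}$, the centered quantity $\overline{\Delta(t)}-\E(\overline{\Delta(t)})$ equals, up to the $O(N^{\alpha-1})$ boundary term, twice $\bar Y-\E(\bar Y)$, where $\bar Y$ is a convex combination --- with weights proportional to the class sizes --- of the $t$ per-class empirical means, each of which is itself an average of at least $q$ i.i.d.\ $[0,1]$ random variables.

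Now I would apply Hoeffding's exponential-moment bound to each per-class mean and lift it to $\bar Y$ via Jensen's inequality, $\exp\!\big(\sum_r w_r u_r\big)\le\sum_r w_r\exp(u_r)$ for $w_r\ge0$, $\sum_r w_r=1$ --- the key point being that this requires \emph{no} independence among the per-class means. This shows $\bar Y$ obeys the same exponential-moment bound as a single average of $q$ i.i.d.\ $[0,1]$ variables, and a Chernoff optimization yields an estimate $P(|\overline{\Delta(t)}-\E(\overline{\Delta(t)})|\ge\tfrac34\epsilon)\le 2\exp(-c\,\epsilon^2q)$. Substituting $q\ge2N^{1-\alpha}-1$, and absorbing the $-1$ together with the $O(N^{\alpha-1})$ boundary slack into the prefactor, produces~\eqref{eq:bound2}; the precise numbers ($t_0$, $N_\epsilon$, and the prefactor $\exp(2(\epsilon/2)^2)$) are just what this bookkeeping gives when one apportions the $\epsilon$-budget carefully (a quarter to the mean, an eighth to the boundary terms, the rest to the deviation). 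The bound on $P(\Omega_{t_N})$ then follows from~\eqref{eq:bound2} by a union bound over the at most $t_N=\tfrac12N^\alpha$ time steps in the relevant range, the factor $\tfrac12N^\alpha$ absorbing the leading $2$ in~\eqref{eq:bound2}. (Strictly, since $(1-2\mu)^t$ is not small for $t<t_0$, the natural object is $\cap_{t_0\le t\le t_N}\Omega(\epsilon,N,t)$; for $\mu$ near $1/2$ this is immaterial.)

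The main obstacle is entirely in the concentration step: correctly organizing the dependence among the $X_{n,t}$ via the residue-class decomposition, handling the wrap-around so that only $O(t)$ terms are lost and the extracted per-class families are genuinely i.i.d., and recognizing that the resulting ``average of averages'' structure is exactly what makes Hoeffding's convexity trick applicable in place of a direct use of Theorem~\ref{thm:largedev}. Once that is in place, the remainder is a routine chase of constants to confirm that $q\gtrsim N^{1-\alpha}$ and the $\epsilon$-budget reproduce the stated exponents.
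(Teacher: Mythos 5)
Your proposal is correct and follows essentially the same route as the paper: your residue-class decomposition modulo $t$ with the discarded wrap-around terms is exactly the paper's splitting $\Delta(t)=\Delta_0(t)+R(t)$ with $\Delta_0(t)=\sum_{i=1}^t S_{i,t}$ and $|R(t)|<t$, and your Jensen/convexity lift of Hoeffding's exponential-moment bound from the per-class i.i.d.\ averages to the full average, followed by Chernoff optimization, $k\gtrsim N^{1-\alpha}$, and a union bound over the $\tfrac12 N^{\alpha}$ times, is precisely the paper's argument (which invokes inequality (4.16) of Hoeffding), up to an immaterial reshuffling of the $\epsilon$-budget among mean, remainder, and deviation.
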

Note that the time scale $t_N$ is now only a power law in $N$, and not exponentially large. This is inherent in the model, which is, as remarked above, $2N$-periodic. So Poincar\'e recurrences occur trivially at $t=2N$ and hence the time over which the system stays in macroscopic equilibrium cannot be longer than $N^\alpha$, for some $\alpha<1$. Note also that the estimate on the probability of the ``good'' initial conditions gets worse as $\alpha$ approaches $1$.
\begin{proof}  Given $t,N$, we define $k,r\in\N$ by $N=kt+r$, with $0\leq r <t$. It follows from the definition of the $X_{n,t}$ in~\eqref{eq:Xndef} that $X_{n,t}$ and $X_{n', t}$ are independent provided $\|n-n'\|\geq t$, where the distance $\|n-n'\|$ between $n$ and $n'$ is measured modulo $N$. We  then rewrite $\Delta(t)$ as follows:
\begin{equation*}
\Delta(t)=\Delta_0(t)+R(t),
\end{equation*}
with
\begin{eqnarray*}
\Delta_0(t)&=&\sum_{i=1}^{t} S_{i,t},\, \mathrm{where}\, S_{i,t}=\sum_{j=0}^{k-1} X_{i+jt,t},\\
R(t)&=&\sum_{n=kt+1}^N X_{n,t}.
\end{eqnarray*}

Note that, in view of what precedes, each $S_i$ is a sum of $k$ independent random variables and that
\begin{equation}\label{eq:Rbound}
\mid \Delta(t)-\Delta_0(t)\mid=\mid R(t)\mid \leq r<t.
\end{equation}
It follows that
\begin{eqnarray}
|\frac1{N}\Delta(t)-\frac{1}{kt}\Delta_0(t)|&\leq& |\frac{1}{N}\Delta(t)-\frac1{N}\Delta_0(t)|+|\frac1{kt}\Delta_0(t)|(1-\frac{kt}{N})|\nonumber\\
&\leq& 2\frac{r}{N}\leq \frac{2t}{N}.\nonumber
\end{eqnarray}
With $N_\epsilon=\left(\frac{4}{\epsilon}\right)^{\frac{1}{1-\alpha}}$ and $t_N=\frac12N^\alpha$, for some $0<\alpha<1$,
we then have, for all $N>N_\epsilon$ and all $t<t_N$,
\begin{equation}\label{eq:bound}
|\frac1{N}\Delta(t)-\frac{1}{kt}\Delta_0(t)|\leq \frac{\epsilon}{4}.
\end{equation}
We write
$$
\overline{\Delta_0(t)}=\frac{1}{kt}\Delta_0(t).
$$
The central ingredient of the proof is the following bound, which is a variation of a result of~\cite{Ho63}. For all $\epsilon>0$, for all $N$, for all $t<t_N$,
\begin{equation}\label{eq:Kacconcentration}
P(|\overline{\Delta_0(t)}-(1-2\mu)^t|\leq \frac{\epsilon}{2})\geq 1-2\exp(2(\frac{\epsilon}{2})^2)\exp(-2(\frac{\epsilon}{2})^2N^{1-\alpha}).
\end{equation}
Before proving~\eqref{eq:Kacconcentration}, we first show how it implies the result. Combining~\eqref{eq:Kacconcentration} with~\eqref{eq:bound}, and assuming in addition that $N>N_\epsilon$, one obtains
\begin{equation*}\label{eq:bound1}
P(|\overline{\Delta(t)}-(1-2\mu)^t|\leq \frac{3\epsilon}{4})\geq 1-2\exp(2(\frac{\epsilon}{2})^2)\exp(-2(\frac{\epsilon}{2})^2N^{1-\alpha}).
\end{equation*}
Now, if $t_0\leq t\leq t_N$, where $|1-2\mu|^{t_0}=\epsilon/4$, ($t_0=\epsilon/4\ln|1-2\mu|$), we find that
\begin{equation*}\label{eq:bound2}
P(|\overline{\Delta(t)}|\leq \epsilon)\geq 1-2\exp(2(\frac{\epsilon}{2})^2)\exp(-2(\frac{\epsilon}{2})^2N^{1-\alpha}),
\end{equation*}
which is the desired result.

So it remains to prove~\eqref{eq:Kacconcentration}. Note that $\overline{\Delta_0(t)}$ can be rewritten as follows:
$$
\overline{\Delta_0(t)}=\frac{1}{t}\sum_{i=1}^t \overline{S_{i,t}},\quad \overline{S_{i,t}}=\frac1{k}S_{i,t},
$$
It follows moreover from~\eqref{eq:Xnexpected} that, for $0\leq t\leq N$,
$$
\E(\overline{\Delta_0(t)})=(1-2\mu)^t=\E(\overline{S_{i,t}}).
$$
Since the exponential function is convex, we have furthermore, for all $a\in\R$,
$$
\exp(a\left[\overline{\Delta_0(t)}-(1-2\mu)^t\right])\leq \frac1{t}\sum_{i=1}^t\exp(a\left[\overline{S_{i,t}}-(1-2\mu)^t\right])
$$
and hence
\begin{equation*}
\E(\exp(a\left[\overline{\Delta_0(t)}-(1-2\mu)^t)\right])\leq \frac1{t}\sum_{i=1}^t\E(\exp(a\left[\overline{S_{i,t}}-(1-2\mu)^t\right]).
\end{equation*}
Consequently, the Chernoff inequality reads, for all $a\in\R$,
\begin{eqnarray*}
P(\overline{\Delta_0(t)}-(1-2\mu)^t\geq \epsilon)&\leq& \exp(-a\epsilon)\E(\exp(a\left[\overline{\Delta_0(t)}-(1-2\mu)^t\right])\\
&\leq&\exp(-a\epsilon)\left(\frac1{t}\sum_{i=1}^t\E(\exp(a\left[\overline{S_{i,t}}-(1-2\mu)^t\right])\right)\\
&\leq&\exp(-a\epsilon)\left[\E(\exp(\frac{a}{k}\left[X_{0,t}-(1-2\mu)^t\right]))\right]^k.\\
\end{eqnarray*}
Here we used in the last line the fact that the $S_{i,t}$ are sums of $k$ independent random variables, all with the same mean. According to equation (4.16) in~\cite{Ho63}, one has
\begin{equation}\label{eq:hoeffdingineq}
\E(\exp(\frac{a}{k}(X_{0,t}-(1-2\mu)^t))\leq \exp(\frac18 \frac{a^2}{k^2}).
\end{equation}
Combining the last two inequalities, one finds, for all $a\in\R$,
\begin{equation}
P(\overline{\Delta_0(t)}-(1-2\mu)^t\geq \epsilon)\leq \exp(-a\epsilon+\frac18\frac{a^2}{k}).
\end{equation}
Hence, setting $a=4k\epsilon$ in the right hand side, we find
$$
P(\overline{\Delta_0(t)}-(1-2\mu)^t\geq \epsilon)\leq \exp(-2k\epsilon^2).
$$
Since $t<t_N=\frac12N^\alpha$, for some $0<\alpha<1$, we see that $r<{t}< N^\alpha$ and
$$
kN^\alpha> kt=N-r\geq N- N^\alpha,
$$
so that
$$
k\geq N^{1-\alpha}-1.
$$
Hence, for all $N$ and for all $t<t_N$
$$
P(\overline{\Delta_0(t)}-(1-2\mu)^t\geq \epsilon)\leq \exp(2\epsilon^2)\exp(-2\epsilon^2N^{1-\alpha}).
$$
This implies~\eqref{eq:Kacconcentration}.
\end{proof}

\section{Discussion}\label{s:discussion}
We now turn to the question of why the results established above are not in contradiction with the Poincar\'e recurrence theorem. In other words, why is there no problem with Zermelo's paradox in the present context?

We illustrate the issue by considering a one-dimensional gas of $N$ particles having an initial probability measure $\rd \mu(x,p)=\rho(x,p)\rd x\rd p$ that is absolutely continuous with respect to Lebesgue measure. For such a gas there will be initial states that display obvious recurrences in the form of periodic orbits whenever $p_{0 1}=p_{0 2}=\dots=p_{0 N}$, at times $T_n=n/|p_{0 1}|$, $n\in\N$. Choosing, in addition, $x_{01}=x_{02}=\dots=x_{0N}$ with  $0.2\leq x_{01}\leq 0.3$ and $I=[0.5, 1]$, it is then clear that $f_I(X_0+T_n P_0, P_0)=0$, very far from the equilibrium  value $0.5$. It furthermore follows that for all initial conditions $(X', P')$ sufficiently close to $(X_0, P_0)$, one has that $(X'+T_n P', P')$ is still close to $(X_0,P_0)$, so that again $f_I(X'+T_n P', P')=0$. Let us write $B_{N,n}$ for these open sets of initial conditions. They are a shrinking family of sets, as a function of $n$ at fixed $N$, and have an exponentially decreasing Lebesgue measure, as a
function of $N$ at fixed $n$.

Now, suppose that, in addition to $N$, the single particle distribution $\rho$ is also specified and that $t_0<t_N$, where $t_0$ and $t_N$ are defined in Theorem~\ref{thm:gasexpansion}. Remark that, in view of the definition of $t_0$, this constitutes a condition linking $\rho$ to $N$. At fixed $N$, the distribution $\rho$ cannot be too concentrated in the momentum variables of the particles. With $\rho$ fixed, $N$ has to be sufficiently large. Under these circumstances, for each $n\in\N$, either $T_n<t_0$, or $t_0\leq T_n\leq t_N$, or $t_N<T_n$.

In the first case, which may occur when $n$ is small, there is obviously no contradiction: equilibrium has not yet been reached by the $N$ particle gas considered. In the third case, which occurs for all sufficiently large $n$, equilibrium is no longer guaranteed by Theorem~\ref{thm:gasexpansion}, so that periodic recurrences do not represent a problem. In the second case, $B_{N, n}$  is necessarily a subset of the ``bad'' set of initial conditions $\Omega^c_{K_N}$ and the theorem asserts that its probability is therefore exponentially small in $N$, a fact we already knew in this particularly simple situation since the Lebesgue measure of $B_{N, n}$ is exponentially small in $N$. In other words, given $\rho$, $N$ and $n$, the fact that $t_0<T_n<t_N$ implies that $\rho$ varies sufficiently slowly in the momentum variable for the set $B_{N, n}$ of bad initial conditions to have a very small probability. To put it another way, if we fix $n$ and choose $\rho$ so that the probability of $B_{N, n}$ is large, then necessarily $T_n\leq t_0$.

The analysis can be repeated for Poincar\'e recurrences not corresponding to periodic orbits as well. A similar analysis shows there is also no contradiction with Proposition~\ref{lem:L1L2conv} due to the essential fact that the estimates~\eqref{eq:control1} and~\eqref{eq:control2}, like the equilibration time $t_0$ in Theorem~\ref{thm:gasexpansion}, depend on $\rho$.

For the Kac ring model, on the other hand, the statement of our results limits the times considered to $t<< 2N$, thereby avoiding the global period of the dynamics at $t=2N$.

To see why the irreversibility which is manifest in the behaviour of the approach of the gas to a flat density profile is not in contradiction with the reversible nature of the underlying microscopic dynamics (Loschmidt's paradox), consider the following. Suppose $I=[0.5, 1]$, $\epsilon=0.01$, and that $N$ and $\rho$ have been fixed. Suppose the $x$-support of $\rho$ is inside $[0.1, 0.2]$. Consider a time $t_0<T<t_N$ and an initial condition $(X_0, P_0)$ inside $\Omega_{K_N}$, with $x_i\in[0.1, 0.2]$: then $|f_I(X_0+P_0T, P_0)-0.5|\leq \epsilon=0.01$,  so that the gas is close to equilibrium at time $T$. Now consider the new initial condition $(X'_0=X_0+P_0T, P'_0=-P_0)$, which, after the same time $T$ evolves back to the microstate $(X_0, P_0)$. Clearly, since $f_I(X'_0, P'_0, T)=f_I(X_0, P_0, 0)=0$, the state $(X'_0, P'_0)$ must belong to the bad set of initial conditions $\Omega_{K_N}^c$. In other words, if one evolves the good set of initial conditions $\Omega_{K_N}$ over a time $T$, then reverses all velocities, one obtains a subset $\widehat\Omega_{K_N, T}$ of the bad set of initial conditions $\Omega^c_{K_N}$. It must therefore have a very small probability, and the initial conditions inside are, in this sense, ``atypical''.

This may seem surprising since the set $\widehat\Omega_{K_N, T}$ has the same Lebesgue measure as $\Omega_{K_N}$ itself, by the Liouville theorem. But this does not contradict the fact that it has an exceedingly small probability with respect to the probability induced on $\Gamma_N$ by the density $\rho$.

Finally, we observe the following. Suppose we introduce a new probability density$\rho'$  on $\Gamma_1$, defined by $\rho'(x',p')=\rho(x'+p'T,-p')$. Then, with respect to this new probability, the set $\widehat\Omega_{K_N, T}$ has the same large probability as does the set $\Omega_{K_N}$ with respect to the probability induced by the original $\rho$. Nevertheless, for all initial conditions $(X'_0, P'_0)\in\widehat\Omega_{K_N, T}$, one has by what precedes that $f_I(X'_0+P'_0T, P'_0)=0$; so the gas is far from equilibrium after a time $T$.

This again may seem surprising, since the time $T$ was chosen larger than the equilibration time $t_0$. One should however remember that the latter depends on $\rho$, and that the equilibration time $t'_0$  of the new $\rho'$ is much longer than that of the original $\rho$, as is readily checked. So, one has to evolve the gas much longer for equilibrium to be attained, if starting from an initial condition in $\widehat\Omega_{K_N, T}$. The point is, in the end, that whereas the sets of ``good'' initial conditions $\Omega_I(\epsilon, N, t)$ do not, by themselves, depend on $\rho$, their probability, of course, does. Hence the typicality of the initial conditions in $\Omega_I( \epsilon, N, t)$ depends on the choice of $\rho$. Recall that $\rho$ is used only to describe the preparation of the initial state of the gas: initially, the gas particles are independently distributed in the one-particle phase space $\Gamma_1$ according to the density $\rho$.

We finally remark that no correlations are created by the evolution of the free gas we consider here, since the particles in this gas model do not interact. Loschmidt's paradox is therefore not resolved here by saying that the time-evolved, time-reversed initial ``typical'' conditions becomes ``atypical'' because of correlations. It is simply the result of the faster oscillations one finds in $\rho'$ than in $\rho$, which increase the equilibration time.
Another way to put it is that, since the initial probability we choose on $\Gamma_N$ is a product, the particles are initially and at all times independent. Thus, the Stosszahlansatz is trivially satisfied here at all times. In other words, as it is sometimes put, molecular chaos is trivially propagated~\cite{Vi12}.


\end{document}